\newtheorem{thm}{Theorem}
\newtheorem{prop}[thm]{Proposition}
{\theorembodyfont{\upshape}
\newtheorem{defin}[thm]{Definition}}
\newtheorem{exmp}[thm]{Example}}
\newtheorem{rem}[thm]{Remark}}
\numberwithin{thm}{subsection}
\numberwithin{equation}{section}
\journalname{Journal of Statistical Physics}
\begin{document}

\title{Microscopic Reversibility and Macroscopic Irreversibility: From the Viewpoint of Algorithmic Randomness}

\author{Ken Hiura \and
        Shin-ichi Sasa}

\institute{K. Hiura \and S. Sasa \at
              Department of Physics, Kyoto University, Kyoto 606-8502, Japan \\
              \email{hiura.ken.88n@st.kyoto-u.ac.jp, sasa@scphys.kyoto-u.ac.jp } }

\date{Received: date / Accepted: date}

\maketitle

\begin{abstract}
The emergence of deterministic and irreversible macroscopic behavior from deterministic and reversible microscopic dynamics is understood as a result of the law of large numbers. In this paper, we prove on the basis of the theory of algorithmic randomness that Martin-L\"of random initial microstates satisfy an irreversible macroscopic law in the Kac infinite chain model. We find that the time-reversed state of a random state is not random as well as it violates the macroscopic law.
\keywords{Microscopic reversibility \and Macroscopic irreversibility \and Algorithmic randomness \and Kac model}

\end{abstract}

\section{Introduction}
\label{sec:1}

\subsection{General introduction}


Let us consider a macroscopic fluid in an adiabatic container. A fundamental assumption of thermodynamics is that any thermodynamically isolated macroscopic system reaches a macroscopically stationary state, called the equilibrium state, after a sufficiently long time regardless of the initial state. The assumption is called the zeroth law of thermodynamics. The spatiotemporal change of macroscopic variables such as coarse-grained density fields of conserved quantities for simple fluids in this relaxation process is believed to be described universally by deterministic and irreversible hydrodynamic equations. This asymmetry in the direction of time is referred to as macroscopic irreversibility. In contrast, the macroscopic system microscopically consists of many interacting molecules. If the system is microscopically isolated, the time evolution of the constituent molecules is described by deterministic and reversible equations such as the classical Hamiltonian equations or quantum Schr{\"o}dinger equations. Therefore, the macroscopic irreversible laws should be formulated in microscopic reversible dynamical systems \cite{lebowitz,bricmont}.


A crucial concept in the formulation is the law of large numbers in the probability theory. Suppose that initial microscopic states are sampled from an initial probability measure. In general, the macroscopic behavior of each microscopic state may be quite different. However, if we choose an appropriate measure corresponding to a given nonequilibrium macrostate, for instance, the local Gibbs measure, as the initial probability measure, the validity of the deterministic macroscopic law is formulated as a result of the law of large numbers \cite{lps}. That is, there is a set of microscopic states satisfying the macroscopic equations with probability approaching one in a macroscopic limit. This is why the hydrodynamic equations describe even a single experimental result with high accuracy. Although to prove the law of large numbers for a given microscopic dynamics and initial probability measure is not an easy task in general, the above scenario is believed to be valid for a wide class of models and it is proved rigorously for specific models.


The law of large numbers gives a clear account of the emergence of macroscopic laws from microscopic dynamics. However, it refers only to the probability that the macroscopic law is satisfied and does not tell us which microscopic states among all realizable states obey the macroscopic law. When considering the reversibility paradox, one finds that this fact becomes problematic. Loschmidt pointed out that if a microscopic trajectory satisfying the microscopic equation of motion obeys the macroscopic law, the time-reversed trajectory is also a solution of the same equation due to the microscopic reversibility, but violates the macroscopic law due to the irreversibility of that law \cite{loschmidt}. Thus, the apparent inconsistency between the macroscopic irreversibility and the microscopic reversibility is relevant to individual trajectories. We note that the recurrence paradox posed by Zermelo also refers to a single trajectory \cite{zermelo}, but this paradox is resolved by considering the thermodynamic limit first. To resolve the reversibility paradox, it is desirable to have a more direct formulation studying individual microscopic states in the thermodynamic limit. In particular, we need a criterion to determine whether a given microscopic state belongs to a set characterized by typical macroscopic properties.


One possible approach to such a formulation is to use the theory of algorithmic randomness \cite{lv,nies,dh,gn}. This theory formalizes in an algorithmic manner the notion of typical sequences generated by a given stochastic process. Let us consider one-sided infinite binary sequences $x = x(0)x(1) \dots \in \{ 0, 1 \}^{\mathbb{N}}$ obtained by tossing a fair coin infinitely many times. Here, the underlying stochastic process is the uniform probability measure $\lambda$ on $\{ 0, 1 \}^{\mathbb{N}}$. Approximately speaking, a sequence $x \in \{ 0, 1 \}^{\mathbb{N}}$ is called \textit{random} if it satisfies no exceptional properties. For instance, a sequence in which $0$ and $1$ appear at a rate of $1/3$ and $2/3$, respectively, is not random under this experiment because its frequencies of $0$ and $1$ are exceptional. Exceptional properties are mathematically formalized by null sets. A set $N \subseteq \{ 0, 1 \}^{\mathbb{N}}$ is called a null set with respect to $\lambda$ if there is a sequence $(U_n)_{n \in \mathbb{N}}$ of open sets such that $N \subseteq \bigcap_n U_n$ and $\lambda(U_n) \leq 2^{-n}$. This consideration motivates the following definition of randomness: A sequence $x$ is random if $x \not\in N$ for any null set $N$. However, since any sequence $x$ is contained in a null set, for instance, a singleton $\{ x \}$, this definition turns out to be meaningless. We have to restrict the class of null sets to provide a meaningful definition of randomness. Following the celebrated idea of Martin-L{\"o}f \cite{martinlof}, we restrict null sets to ones that can be created \textit{feasibly in an algorithmic manner}. They specified such a feasible null set as one that is contained in $\bigcap_n U_n$ for a sequence $(U_n)_{n \in \mathbb{N}}$ of computably enumerable open sets in a uniform way with $\lambda(U_n) \leq 2^{-n}$. Then, a sequence $x \in \{ 0, 1 \}^{\mathbb{N}}$ is called Martin-L{\"o}f random with respect to $\lambda$ if $x \not\in N$ for any feasible null set $N$ in Martin-L{\"o}f's sense. See section \ref{sec:2} for details. Since the set of Martin-L{\"o}f random sequences has probability one, this definition captures some aspect of typicalness of sequences. Remarkably, Martin-L{\"o}f randomness, which characterizes random sequences as measure-theoretic typicalness, is equivalent to other notions of randomness characterized by incompressibility and unpredictability conditions. As a result, Martin-L{\"o}f randomness is regarded as a natural notion of randomness.


We apply the theory of algorithmic randomness to statistical physics on the basis of its characteristics that the algorithmically random sequences satisfy statistical properties such as the law of large numbers and of the iterated logarithm. The strong law of large numbers in the probability theory states that
\begin{align}
\label{eq:illn}
 \lim_{n \to \infty} \frac{1}{n} \sum_{i=0}^{n-1} x(i) = \frac{1}{2} \ \ \text{\textit{for almost all} $x$ with respect to} \ \lambda.
\end{align}
In contrast, the counterpart in the theory of algorithmic randomness states that
\begin{align}
\label{eq:ielln}
 \lim_{n \to \infty} \frac{1}{n} \sum_{i=0}^{n-1} x(i) = \frac{1}{2} \ \ \text{\textit{for random} $x$ with respect to} \ \lambda.
\end{align}
Although the former statement refers to only the probability that the law of large numbers is satisfied, the latter refers to individual sequences. In the context of statistical physics, the statistical properties of random sequences imply that the probability-theoretic statement,
\begin{quote}
 ``\textit{Almost all} microscopic states with respect to a probability measure obey a macroscopic law,''
\end{quote}
can be replaced by the pointwise one,
\begin{quote}
 ``\textit{Random} microscopic states with respect to a probability measure obey a macroscopic law.''
\end{quote}
Since this statement is expressed at the level of individual states, the notion of algorithmic randomness is useful to discuss the foundations of statistical mechanics beyond probability-theoretic statements.


The formulation with the notion of randomness provides a new perspective on the reversibility paradox. From a measure-theoretic point of view, the microscopic reversibility is consistent with our experience since the time-reversed state of a typical state with respect to a probability measure violates the macroscopic law, but has only an extremely small probability with respect to the same measure. In contrast, from a viewpoint of algorithmic randomness, the microscopic reversibility implies that the time-reversed state of a random state is not random because it is contained in a null set involved with a violation of the macroscopic law. This fact has an implication in the relation between the randomness of a physical state and the ease of preparation of that state. If we can specify a description of a binary sequence $x \in \{ 0, 1 \}^{\mathbb{N}}$ completely in an algorithmic manner, the sequence is not random because the singleton $\{ x \}$ is a null set in Martin-L\"of's sense. In contrast, a sequence we generate by a stochastic device such as tossing a coin many times is algorithmically random.  Now,  when we prepare a state of a physical system, we cannot avoid a certain source of noise. Although the relation between a stochastic device and noises in preparation of physical states is not obvious, we may say that it is difficult to experimentally prepare the time-reversed state of a random state, which is nonrandom, because we have to avoid all sources of noises to prepare a nonrandom state. In this manner, the theory of algorithmic randomness clarifies a conceptually new aspect of the reversibility paradox. We show a part of the results that the theory of algorithmic randomness reveals in this paper.

\subsection{Summary of results}
In this paper, we demonstrate with the aid of a pedagogical model how the emergence of macroscopic irreversible laws from reversible microscopic dynamics is formulated in terms of algorithmic randomness. We expect that the following results hold true for a wide class of models although we investigate a specific model in this paper.

We study a variant of the Kac ring model \cite{kac,go,mns}, which consists of two kinds of degrees of freedom, particles with spin $2x(i)-1 \in \{-1, 1\}$ ($i \in \mathbb{Z}$) on a one-dimensional infinite lattice $\mathbb{Z}$ and scatterers $y(i) \in \{ 0, 1\}$ located between particles. At each discrete time step, a particle at site $i \in \mathbb{Z}$ moves to site $i+1$. Then, the bit $x(i)$ is flipped if the scatterer at site $i$ is present, $y(i)=1$, and it remains its value if absent, $y(i)=0$. This evolution rule $\varphi: \{ 0, 1 \}^{\mathbb{Z}} \times \{ 0, 1 \}^{\mathbb{Z}} \to \{ 0, 1 \}^{\mathbb{Z}} \times \{ 0, 1 \}^{\mathbb{Z}}$ defines a discrete-time, deterministic and reversible dynamical system on $\{ 0, 1 \}^{\mathbb{Z}} \times \{ 0, 1 \}^{\mathbb{Z}}$. If we choose a set of macroscopic variables $m = (m_0, m_1)$ as the average magnetization $m_0^N$ and the fraction of scatterers $m_1^N$ over $2N+1$ sites around the origin, the system exhibits deterministic and irreversible behavior in the sense of the  law of large numbers: \textit{for almost all $(x,y)$ with respect to $\mu_{(1+m_0)/2} \times \mu_{m_1}$},
\begin{align}
 \lim_{N \to \infty} (m_i^N \circ \varphi^t)(x,y) = \Phi_i^t(m) \ \text{for all $i \in \{0, 1\}$ and $t \in \{0, \dots, T\}$},
\end{align}
where $\mu_p$ is the Bernoulli measure on $\{ 0, 1\}^{\mathbb{Z}}$ with parameter $p \in [0,1]$. $\varphi^t(x,y)$ is the microscopic state at time $t$ starting from a microstate $(x,y)$. $\Phi(m) = ((1-2m_1)m_0, m_1)$ represents a macroscopic law in the model. Our main claim of this paper is that algorithmically random microstates with respect to the initial probability measure satisfy the macroscopic law in the thermodynamic limit. That is to say, \textit{for Martin-L\"of random $(x,y)$ with respect to $\mu_{(1+m_0)/2} \times \mu_{m_1}$},
\begin{align}
 \lim_{N \to \infty} (m_i^N \circ \varphi^t)(x,y) = \Phi_i^t(m) \ \text{for all $i \in \{0, 1\}$ and $t \in \{0, \dots, T\}$}.
\end{align}
This result implies the zeroth law of thermodynamics for individual random microstates. Thus, the notion of algorithmic randomness opens the possibility of formulating macroscopic properties such as hydrodynamic equations and the zeroth law of thermodynamics at the level of individual microscopic states.

To quantify the irreversibility for individual trajectories, we define a quantity called the irreversible information loss as the logarithm of the ratio of probabilities at time $t$ of a microscopic state $\varphi^t(x,y)$ and the time-reversed one $(\pi \circ \varphi^t)(x,y)$ \cite{sk}. We prove that the irreversible information loss is positive for any random state, which implies the difficulty of realizing the time-reversed state of a random state in a measure-theoretic sense. The randomness notion sheds light on another aspect of the reversibility paradox. We show that the time-reversed state of a random microstate is not random as well as it violates the macroscopic law. This result is a manifestation of the macroscopic irreversibility that is revealed only after we apply the theory of algorithmic randomness to the problem in statistical physics.

\subsection{Previous studies}

There are a few works that have applied the theory of algorithmic randomness to statistical physics. The basic idea of such previous studies is to employ the Kolmogorov complexity for a microscopic state, which is the shortest program length outputting the state, and to present a formulation combining the Shannon entropy with the Kolmogorov complexity \cite{bennett}. For example, a new definition of entropy for microstates was proposed to provide some insight into Maxwell's demon problems \cite{zurek1,zurek2,caves}. It should be noted that the Kolmogorov complexity is independent of the probability measure, while the Martin-L{\"o}f randomness is defined for a probability measure. The most important relation between the two concepts is that an infinite sequence is Martin-L\"of random with respect to a probability measure $\mu$ if and only if the Kolmogorov complexity of the sequence is not smaller than the optimal compression length under the probability measure, $- \log \mu$, calculated from Shannon information theory. See Theorem \ref{thm:mlc} for the precise statement. Therefore, the difference of the optimal compression length from the Kolmogorov complexity, which is referred to as the randomness deficiency \cite{gn,gacs}, is the most important quantity to identify the Martin-L{\"o}f randomness. By using the randomness deficiency, we can express our statement as ``The randomness deficiency for an initial state diverges if the macroscopic behavior does not obey a macroscopic law.'' As far as we know, no statement using the Kolmogorov complexity of initial microstates has been addressed for describing the macroscopic irreversibility.

\subsection{Outline of the paper}
The remainder of the paper is organized as follows.

In section \ref{sec:2}, we review the theory of algorithmic randomness. To explain it in a self-contained manner, we include a brief review of computability theory and measure theory on the binary Cantor space. In section \ref{sec:3}, we first introduce a variant of the Kac ring model. We prove the law of large numbers in a measure-theoretic sense. With this in mind, we provide the pointwise version of the law of large numbers on the basis of algorithmic randomness. In section \ref{sec:4}, we define the Shannon and Boltzmann entropies. The pointwise law of large numbers leads to a pointwise version of the zeroth law of thermodynamics. In section \ref{sec:5}, we investigate the consequence of microscopic reversibility. We define a quantity called irreversible information loss quantifying the asymmetry between a microscopic trajectory and the time-reversed one, and prove the positivity of this quantity for random states. By using the reversibility property of microscopic dynamics, we construct a probability measure with respect to which the Boltzmann entropy decreases along the typical macroscopic trajectory. Similarly, we prove the nonrandomness of time-reversed states. In section \ref{sec:6}, we conclude with open problems and related topics.

\subsection{Notations}
We use the following notations throughout this paper.

$\mathbb{N}$, $\mathbb{Z}$, $\mathbb{Q}$, $\mathbb{R}$, $\mathbb{Q}_{\geq 0}$, and $\mathbb{R}_{\geq0}$ denote the set of natural numbers, integers, rational numbers, real numbers, nonnegative rational and real numbers, respectively. Let $\{ 0, 1 \}^{\mathbb{N}}$ denote the set of all infinite binary sequences, which is identified the set of all functions from $\mathbb{N}$ to $\{ 0, 1 \}$, $\{ 0, 1 \}^{<\mathbb{N}}$ the set of finite binary strings including the empty string $\square$, $|\sigma|$ the length of a string $\sigma \in \{ 0, 1 \}^{<\mathbb{N}}$, and $\sigma \tau$ the concatenation of finite string $\sigma$ and finite or infinite string $\tau$. A subset of natural numbers $A \subseteq \mathbb{N}$ is identified with its characteristic function $\chi_A \in \{ 0, 1 \}^{\mathbb{N}}$. For a finite string $\sigma$ and finite or infinite string $\tau$, we let $\sigma \sqsubseteq \tau$ denote that $\sigma$ is a prefix of $\tau$. For a finite or infinite string $x$, $x(n)$ denotes the $n$-th element of $x$ and $x \upharpoonright n$ or $x(0:n-1)$ its first $n$ bits $x(0)x(1)\dots x(n-1)$.

\section{Preliminaries}
\label{sec:2}
In this section, we review the algorithmic theory of randomness. Since this theory is based on computability theory, we also provide a brief review of computability theory. We hope that the paper will be read by theoretical physicists who are unfamiliar with computability theory. This section includes only a minimal set of concepts necessary for reading this paper and omits proofs of theorems. For more details of topics and proofs of theorems, see \cite{lv,nies,dh,gn} for the theory of algorithmic randomness and \cite{cooper,odifreddi1,odifreddi2} for the computability theory.

\subsection{Computability theory}
\label{subsec:21}
A function from a subset $A \subseteq \{ 0, 1 \}^{<\mathbb{N}}$ to $\{ 0, 1 \}^{<\mathbb{N}}$ is called a \textit{partial function} on $\{ 0, 1 \}^{<\mathbb{N}}$ and is denoted by $f: \subseteq \{ 0, 1 \}^{<\mathbb{N}} \to \{ 0, 1 \}^{<\mathbb{N}}$. The subset $A$ is called the domain of $f$ and is denoted by $\mathrm{dom}(f)$. The range of $f$ is denoted by $\mathrm{range}(f)$. If $A = \{ 0, 1 \}^{<\mathbb{N}}$, $f$ is called \textit{total} and is denoted by $f: \{ 0, 1 \}^{<\mathbb{N}} \to \{ 0, 1 \}^{<\mathbb{N}}$. A central concept of computability theory is the following.

\begin{defin}[computable function]
A partial function $f: \subseteq \{ 0, 1 \}^{<\mathbb{N}} \to \{ 0, 1 \}^{<\mathbb{N}}$ is \textit{computable} if there exists a Turing machine $M$ such that $M$ computes $f$. 
\end{defin}

Informally, each Turing machine $M$ represents a computer program and a partial function $f$ is computable if there is a program or algorithm such that for any input string $\sigma \in \{ 0, 1\}^{< \mathbb{N}}$, it either outputs $f(\sigma)$ if $f(\sigma)$ is defined, or it outputs nothing if $f(\sigma)$ is not defined. If we choose a coding function from $\{ 0, 1 \}^{<\mathbb{N}}$ to a countable object such as natural numbers, finite tuples of natural numbers, integers and rational numbers, we can extend the notion of computability of functions on $\{ 0, 1 \}^{<\mathbb{N}}$ to functions on the object. For instance, we can represent a natural number $n \in \mathbb{N}$ as a binary string $\beta(n) \in \{ 0, 1 \}^{<\mathbb{N}}$ by using the binary expansion. A partial function $f$ on $\mathbb{N}$ is called computable if there is a partial computable function $g: \subseteq \{ 0, 1 \}^{<\mathbb{N}} \to \{ 0, 1 \}^{<\mathbb{N}}$ with $g \circ \beta = \beta \circ f$. Similarly, a function $f: \subseteq \{ 0, 1 \}^{<\mathbb{N}} \to \mathbb{Q}$ is computable if there exists computable functions $\delta \times p \times q: \subseteq \{ 0, 1 \}^{<\mathbb{N}} \times \{ 0, 1 \}^{<\mathbb{N}} \times \{ 0, 1 \}^{<\mathbb{N}} \to \{ 0, 1\} \times \mathbb{N} \times \mathbb{N} \backslash \{ 0 \}$ such that $f(\sigma) = (-1)^{\delta(\sigma)} p(\sigma) / q(\sigma)$ for any $\sigma \in \{ 0, 1 \}^{<\mathbb{N}}$. All functions implemented in modern computers such as addition, multiplication, subtraction, division, and bounded summation are computable.

A set $A \subseteq \{ 0, 1 \}^{<\mathbb{N}}$ is computable if its characteristic function $\chi_A: \{ 0, 1 \}^{<\mathbb{N}} \to \{ 0, 1\}$ is computable. For instance, the set of all prime numbers is computable because there is an algorithm that decides whether a given natural number is a prime number or not. To formulate the notion of algorithmic randomness, we use a weaker notion of the computability of sets.

\begin{defin}[computably enumerable]
A set $A \subseteq \{ 0, 1 \}^{<\mathbb{N}}$ is \textit{computably enumerable} (\textit{c.e.}) if there exists a partial computable function $f: \subseteq \{ 0, 1 \}^{<\mathbb{N}} \to \{ 0, 1 \}^{<\mathbb{N}}$ such that $A=\mathrm{range}(f)$.
\end{defin}

This means that there exists an algorithm enumerating or listing all the members of the set. For example, for a polynomial $p(y_1,y_2)$ with integer coefficients, $D = \{ x \in \mathbb{N} : \exists y_1, y_2 \in \mathbb{N} \ p(y_1,y_2) = x \}$ may not be computable but is computably enumerable. It is easy to prove that $A \subseteq \{ 0, 1 \}^{<\mathbb{N}}$ is computable if and only if both $A$ and $\bar{A}$ are c.e. In particular, if $A$ is computable, then $A$ is c.e. Computable enumerability is a properly weaker notion than computability because there is a set that is computably enumerable but not computable. Examples of such sets are the halting problem of Turing machines and Hilbert's tenth problem.

We also define uniformly computable enumerability of sequences of sets.

\begin{defin}[uniformly c.e.]
A sequence $(A_n)_{n \in \mathbb{N}}$ of sets $A_n \subseteq \{ 0, 1 \}^{<\mathbb{N}}$ is \textit{computably enumerable uniformly in $n$} if there exists a partial computable function $f: \subseteq \{ 0, 1 \}^{<\mathbb{N}} \times \mathbb{N} \to \{ 0, 1 \}^{<\mathbb{N}}$ such that $A_n=\mathrm{range}(f(\cdot, n))$ for all $n \in \mathbb{N}$.
\end{defin}

A Turing machine is a special-purpose machine in the sense that the machine computes one computable function. Since we can code a program by a natural number in a computable manner, we can construct a \textit{universal Turing machine}, which is a model of present-day computers. This is why we can now implement any program by using only one computer.

\begin{thm}[universal Turing machine]
There is a partial computable function of two variables $g$ such that $g(e,x) = f_e(x)$ for any input $x$ and any partial computable function $f_e$ indexed by a natural number $e$.
\end{thm}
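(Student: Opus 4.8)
\section*{Proof proposal}

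The plan is to construct an explicit universal machine and to make precise the indexing $e \mapsto f_e$. First I would fix an effective coding of Turing machines by natural numbers. A Turing machine is specified by a finite amount of data --- a finite set of states, a finite tape alphabet, a designated start state and halting state(s), and a finite transition table --- so each such description is a finite string over a fixed alphabet and can be encoded as a natural number by a computable injection. Conversely, given $e \in \mathbb{N}$ one can decide computably whether $e$ codes a syntactically valid description and, if so, recover that description; if $e$ is not a valid code we let $M_e$ be the machine with empty domain. This yields an enumeration $(M_e)_{e \in \mathbb{N}}$ of all Turing machines. Defining $f_e$ to be the partial function computed by $M_e$, every partial computable function appears as some $f_e$, since by Definition of computability it is computed by at least one Turing machine.

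Next I would describe the machine $U$ computing $g$. On input $(e,x)$ --- a legitimate two-variable partial function by the coding conventions of Section~\ref{subsec:21} --- $U$ first decodes $e$ to obtain the transition table of $M_e$. It then writes on a portion of its tape a representation of the \emph{configuration} of $M_e$, namely the current state, the head position, and the finite non-blank part of $M_e$'s tape, initialized to the start state with input $x$. $U$ then enters a loop: it reads the currently scanned symbol and current state from its stored representation, looks up the corresponding entry of $M_e$'s transition table (a bounded, computable search through a finite table fixed by $e$), updates the stored configuration accordingly, and repeats. When the stored state becomes a halting state of $M_e$, $U$ extracts the output of $M_e$ from the stored tape contents and halts with that output; otherwise $U$ loops forever. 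Setting $g(e,x) := U(e,x)$ defines a partial computable function of two variables.

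Correctness follows by induction on the number of simulated steps: the configuration stored by $U$ after $k$ iterations of its loop equals the configuration of $M_e$ on input $x$ after $k$ computation steps. Hence $U$ halts on $(e,x)$ if and only if $M_e$ halts on $x$, and when it does the outputs agree, so $g(e,x) = f_e(x)$ for all $e$ and $x$, with $g(e,x)$ undefined exactly when $f_e(x)$ is.

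I expect the main obstacle to be not the simulation idea, which is routine, but the faithful preservation of divergence: $U$ must never halt when $M_e$ does not, so the loop must have no spurious exit, and the table lookup must behave (halt / continue) exactly as $M_e$'s model prescribes for a missing or present transition entry. Making the encoding of configurations and the one-step transition function fully explicit is tedious bookkeeping but presents no conceptual difficulty; the existence of such a construction is precisely what licenses treating ``computable'' as machine-independent throughout the rest of the paper.
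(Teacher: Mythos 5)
The paper itself gives no proof of this theorem (Section~\ref{sec:2} explicitly omits proofs and defers to the standard references on computability theory), and your construction is exactly the classical argument those references give: effectively enumerate machine descriptions, then build a simulator that steps through $M_e$'s configurations on input $x$, halting if and only if $M_e$ does. Your proposal is correct, and you rightly flag the only delicate point, namely that the simulation must faithfully preserve divergence so that $g(e,x)$ is undefined precisely when $f_e(x)$ is.
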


There is no coding that maps from finite strings to real numbers because the set of all real numbers is uncountable. Therefore, we say that a real number is computable if there exists a sequence of rationals approximating the real number from below and above in a computable way.

\begin{defin}[computable real, computable real-valued function]
A real number $x \in \mathbb{R}$ is \textit{lower semicomputable} if the set $\{ q \in \mathbb{Q} : q < x \}$ is computably enumerable. $x$ is \textit{upper semicomputable} if $-x$ is lower semicomputable. $x$ is \textit{computable} if it is both lower and upper semicomputable. Similarly, a real-valued function $f: \{ 0, 1 \}^{<\mathbb{N}} \to \mathbb{R}$ is \textit{lower semicomputable} if the set $\{ (\sigma, q) \in \{ 0, 1 \}^{<\mathbb{N}} \times \mathbb{Q} : q < f(\sigma) \}$ is computably enumerable. $f$ is \textit{upper semicomputable} if $-f$ is lower semicomputable. $f$ is \textit{computable} if it is both lower and upper semicomputable.
\end{defin}

\subsection{Topology and measure theory in Cantor space}

We review the basics of topology and measure theory on the set of infinite binary sequences. For a finite string $\sigma \in 2^{< \mathbb{N}}$, we use $[ \sigma ]$ to denote the cylinder set, that is, the set of all infinite binary sequences whose prefix is $\sigma$, $[ \sigma ] = \{ \sigma \tau : \tau \in \{ 0, 1 \}^{\mathbb{N}} \}$. For $S \subseteq \{ 0, 1 \}^{<\mathbb{N}}$, we let $[ S ] = \bigcup_{\sigma \in S} [ \sigma ]$.

\begin{defin}[c.e. open]

The Cantor space is $\{ 0, 1 \}^{\mathbb{N}}$ equipped with the product topology of a countable number of copies of the discrete topological space $\{ 0, 1\}$. The Cantor space has a countable basis of cylinder sets $\{ [ \sigma ] : \sigma \in \{ 0, 1 \}^{<\mathbb{N}} \}$. A subset  $A \subseteq \{ 0, 1 \}^{\mathbb{N}}$ is \textit{open} if it is the union of a subset of cylinder sets, that is,
\begin{align}
 A = [ S ] = \bigcup_{\sigma \in S} [ \sigma ]
\end{align}
for some subset of strings $S \subseteq \{ 0, 1\}^{<\mathbb{N}}$. If there exists a computably enumerable set $S$ such that $A = [ S ]$, then $A$ is called \textit{c.e. open}. A sequence $(A_n)_{n \in \mathbb{N}}$ of sets $A_n \subseteq \{ 0, 1 \}^{\mathbb{N}}$ is \textit{c.e. open uniformly in $n$} if there exists a sequence $(S_n)_{n \in \mathbb{N}}$ of c.e. sets uniformly in $n$ such that $A_n = [ S_n ]$ for all $n$.
\end{defin}

Let $(\{0, 1\}^{\mathbb{N}}, \mathcal{B})$ be the measurable space, where $\mathcal{B}$ is the Borel $\sigma$-algebra. It is known that a probability measure on $(\{0, 1\}^{\mathbb{N}}, \mathcal{B})$ can be constructed from a premeasure on $\{ [ \sigma ] : \sigma \in \{ 0, 1 \}^{<\mathbb{N}} \}$ with the aid of the Carath\'eodory's extension theorem. In the following, $\mu_{\rho}$ denotes the induced probability measure from a premeasure $\rho$ and is identified with the premeasure.

\begin{defin}[premeasure, computable measure]

A \textit{probability premeasure} is a function $\rho: \{ 0, 1 \}^{<\mathbb{N}} \to \mathbb{R}_{\geq 0}$ such that $\rho(\Box) = 1$ and $\rho(\sigma 0) + \rho(\sigma 1) = \rho(\sigma)$ for all $\sigma \in \{ 0, 1 \}^{<\mathbb{N}}$. The induced probability measure $\mu_{\rho}$ is \textit{computable} if $\rho$ is computable as a real-valued function.
\end{defin}

\begin{exmp}
\
\begin{itemize}
 \item[(1)] A premeasure $\rho(\sigma) = 2^{-|\sigma|}$ for $\sigma \in \{ 0, 1 \}^{<\mathbb{N}}$ induces the \textit{uniform measure} or the \textit{Lebesgue measure} $\lambda$.

 \item[(2)] Let $p$ be a real number such that $p \in (0,1)$. We set $\rho(1) = p$, $\rho(0) = 1-p$, and define a probability premeasure $\rho_{p} : \{ 0, 1 \}^{<\mathbb{N}} \to \mathbb{R}_{\geq 0}$ by
\begin{align}
 \rho_{p}(\sigma) = \prod_{i=0}^{|\sigma|-1} \rho(\sigma(i)).
\end{align}
We call the induced measure $\mu_{\rho_{p}}$ the \textit{Bernoulli measure} of parameter $\bar{p}$, which is denoted simply by $\mu_{p}$. The Bernoulli measure $\mu_{p}$ of parameter $p$ is computable if and only if $p$ is a computable real. We note that the Bernoulli measure with $p=1/2$ is the uniform measure $\lambda$.

\item[(3)] Let $x \in \{ 0, 1 \}^{\mathbb{N}}$ be a sequence. The \textit{Dirac measure} $\delta_x$ concentrated on $x$ is induced by the premeasure
\begin{align}
 \rho_x(\sigma) = \begin{cases}
 1 & \text{if} \ \sigma \sqsubseteq x \\
 0 & \text{otherwise}.
 \end{cases}
\end{align}
For any measurable set $A \subseteq \{ 0, 1 \}^{\mathbb{N}}$, 
\begin{align}
 \delta_x(A) = \begin{cases}
 1 & \text{if} \ x \in A \\
 0 & \text{if} \ x \not\in A.
 \end{cases}
\end{align}
\end{itemize}
\end{exmp}

We use the first Borel-Cantelli lemma to prove the strong form of the law of large numbers. Additionally, Martin-L\"of randomness has an alternative characterization in terms of the effective version of the Borel-Cantelli lemma (see Definition \ref{defin:solovay}).

\begin{thm}[first Borel-Cantelli lemma]
\label{thm:bc}
Let $(C_n)_{n \in \mathbb{N}}$ be a sequence of measurable sets. If $\sum_{n=0}^{\infty} \mu (C_n) < \infty$, then
\begin{align}
 \mu ( \{ x : x \in C_n \ \text{for infinitely many} \ n \} ) = \mu \left( \bigcap_{n=0}^{\infty} \bigcup_{k \geq n}^{\infty} C_k \right) = 0
\end{align}
\end{thm}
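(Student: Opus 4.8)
The plan is to recognize the set on the left as the measure-theoretic limit superior of the sequence $(C_n)$ and then squeeze its measure between $0$ and the tails of the convergent series $\sum_n \mu(C_n)$.

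First I would verify the set-theoretic identity
\begin{align}
\{ x : x \in C_n \ \text{for infinitely many} \ n \} = \bigcap_{n=0}^{\infty} \bigcup_{k \geq n}^{\infty} C_k,
\end{align}
which holds because a point belongs to infinitely many $C_n$ exactly when, for every $n$, it belongs to some $C_k$ with $k \geq n$. Abbreviating $D_n = \bigcup_{k \geq n}^{\infty} C_k$, the right-hand side is $\bigcap_{n} D_n$, which is measurable since each $C_k$ is measurable and $\mathcal{B}$ is closed under countable unions and intersections.

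Next, for each fixed $n$, monotonicity of $\mu$ gives $\mu(\bigcap_{m} D_m) \leq \mu(D_n)$, while countable subadditivity of $\mu$ gives $\mu(D_n) \leq \sum_{k \geq n}^{\infty} \mu(C_k)$. The hypothesis $\sum_{k=0}^{\infty} \mu(C_k) < \infty$ forces the tail sums $\sum_{k \geq n}^{\infty} \mu(C_k)$ to converge to $0$ as $n \to \infty$. Letting $n \to \infty$ in this chain of inequalities yields $\mu(\bigcap_{m} D_m) \leq \inf_n \sum_{k \geq n}^{\infty} \mu(C_k) = 0$, hence the measure equals $0$.

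There is no genuine obstacle here; the only points requiring care are the set-theoretic identification of the lim sup event and the appeal to countable (not merely finite) subadditivity of $\mu$, which is exactly what licenses bounding $\mu(D_n)$ by a tail of the series.
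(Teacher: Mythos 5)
Your proof is correct and complete: the identification of the event with $\bigcap_n \bigcup_{k \geq n} C_k$, the bound $\mu\bigl(\bigcap_m D_m\bigr) \leq \mu(D_n) \leq \sum_{k \geq n} \mu(C_k)$ via monotonicity and countable subadditivity, and the vanishing of the tails of a convergent series constitute the standard argument. The paper states this lemma without proof (referring the reader to standard sources), and your argument is exactly the one those sources give, so there is nothing further to compare.
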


\subsection{Martin-L{\"o}f randomness}
\label{subsec:mlr}

In the probability theory, any realization of a stochastic process is assumed to occur randomly. In an $n$ times fair coin tossing experiment, a realization $0^n = 00 \dots 0$ ($n$ zeros) has the same probability $2^{-n}$ as any other realization. There is no difference between all realizations in this sense. Nevertheless, we believe that the relative frequencies of heads and tails approach asymptotically to $1/2$ as $n \to \infty$ under this experiment. This belief is represented by the strong law of large numbers in the probability theory. Although $0^{\mathbb{N}} = 00 \dots$ (infinitely many zeros) is a realizable outcome, it is not \textit{random} in that it does not satisfy the law of large numbers. Thus, it is possible to distinguish between random sequences and nonrandom ones according to the statistical laws that have the probability one. In other words, the notion of random sequences generated by a stochastic process is defined as ones having typical properties, or equivalently, having no exceptional properties. However, it is not clear what class of typical properties or exceptional properties we should choose to define random sequences. For instance, although $(01)^{\mathbb{N}} = 010101 \dots$ satisfies the law of large numbers, our intuition tells us that it is not a typical sequence generated by a fair coin tossing and therefore should not be random. Even if we require that the law of large numbers should hold for subsequences selected from a whole sequence by countable rules, there is a sequence satisfying the requirement but violating the law of the iterated logarithm, which is known as Ville's counterexample \cite{ville} (see also Theorem 6.5.1 in \cite{dh}). Hence, just the law of large numbers is not enough to characterize randomness. One naive idea is to consider all exceptional properties. We then define a set describing an exceptional property.

\begin{defin}[null set]
A set $N \subseteq \{ 0, 1 \}^{\mathbb{N}}$ is a \textit{null set} with respect to a probability measure $\mu$ if there is a sequence $(U_n)_{n \in \mathbb{N}}$ of open sets such that $N \subseteq \bigcap_{n \in \mathbb{N}} U_n$ and $\mu(U_n) \leq 2^{-n}$.
\end{defin}

\begin{exmp} \label{ex:null}
\
\begin{itemize}
 \item[(1)] For $x \in \{ 0, 1 \}^{\mathbb{N}}$, the one-element set $\{ x \}$ is a null set with respect to $\lambda$. Indeed, $\{ x \} = \bigcap_{n \in \mathbb{N}} U_n$, where $U_n = [x \upharpoonright n]$ with $\lambda (U_n) = 2^{-n}$.
 \item[(2)] $N = \{ x \in \{ 0, 1 \}^{\mathbb{N}} : x(2n)=1 \ \text{for all} \ n \in \mathbb{N} \}$ is a null set with respect to $\lambda$. Indeed, $N = \bigcap_{n \in \mathbb{N}} U_n$, where $U_n = \{ x \in \{ 0, 1 \}^{\mathbb{N}} : x(2k) = 1, \ 0 \leq k < n \}$ with $\lambda (U_n) = 2^{-n}$. 
\end{itemize}
\end{exmp}

Example \ref{ex:null} (1) shows that the naive idea fails because there is no sequence not contained in all null sets. To obtain a meaningful definition of random sequences, we have to restrict the class of null sets. The definition must satisfy the following two requirements at least.
\begin{enumerate}
 \item The set of random sequences is typical in measure-theoretic sense, that is, it has probability one.
 \item Sequences generated by some simple rule such as $(01)^{\mathbb{N}}$ are not random with respect to $\lambda$.
\end{enumerate}
We should note that a countable union of null sets is also a null set.

\begin{prop} \label{prop:null}
Let $( N_e )_{e \in \mathbb{N}}$ be a sequence of null sets with respect to a probability measure $\mu$. The countable union $\bigcup_{e \in \mathbb{N}} N_e$ of the sequence is  a null set.
\end{prop}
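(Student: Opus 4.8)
The plan is to use the standard reindexing trick, choosing the approximating open sets for $N_e$ at a level that depends on $e$ so that their measures still sum to a geometric series. For each $e \in \mathbb{N}$, the hypothesis that $N_e$ is null provides a sequence $(U^e_n)_{n \in \mathbb{N}}$ of open sets with $N_e \subseteq \bigcap_{n} U^e_n$ and $\mu(U^e_n) \leq 2^{-n}$. First I would define, for each $m \in \mathbb{N}$,
\begin{align}
 V_m := \bigcup_{e \in \mathbb{N}} U^e_{m+e+1},
\end{align}
which is open as a union of open sets.

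Next I would verify the two defining conditions of a null set for the sequence $(V_m)_{m \in \mathbb{N}}$. For the measure bound, countable subadditivity of $\mu$ gives
\begin{align}
 \mu(V_m) \leq \sum_{e \in \mathbb{N}} \mu\bigl(U^e_{m+e+1}\bigr) \leq \sum_{e \in \mathbb{N}} 2^{-(m+e+1)} = 2^{-m}.
\end{align}
For the covering condition, suppose $x \in \bigcup_{e} N_e$, so that $x \in N_{e_0}$ for some $e_0$. Then $x \in U^{e_0}_n$ for every $n$, and in particular $x \in U^{e_0}_{m+e_0+1} \subseteq V_m$ for every $m$, so $x \in \bigcap_{m} V_m$. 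Hence $\bigcup_{e} N_e \subseteq \bigcap_{m} V_m$ with $\mu(V_m) \leq 2^{-m}$, which is precisely the definition of a null set with respect to $\mu$.

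There is essentially no obstacle in this argument; the only points worth noting are that the shift $m+e+1$ (rather than simply $m$) is needed so that the sum over $e$ converges to at most $2^{-m}$, and that the proof uses only countable subadditivity of $\mu$, which holds for any probability measure. One should also observe that each $V_m$ is open, hence Borel measurable, so that $\mu(V_m)$ is well defined and the estimate above makes sense.
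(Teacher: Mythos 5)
Your proof is correct: the reindexing $V_m = \bigcup_e U^e_{m+e+1}$ together with countable subadditivity gives $\mu(V_m) \leq 2^{-m}$, and the covering condition is verified properly. The paper states this proposition without proof, and your argument is exactly the standard one that is implicitly intended, so there is nothing to add.
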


Therefore, if we choose a countable family of null sets to define random sequences, the first condition is automatically satisfied. We should also impose some computability conditions on the null sets if we interpret the generation by simple rules as listing elements in an algorithmic manner. The above argument motivates the following definition.

\begin{defin}[Martin-L{\"o}f random \cite{martinlof}]
Let $\mu$ be a computable probability measure on $\{ 0, 1 \}^{\mathbb{N}}$. A \textit{Martin-L{\"o}f test} with respect to the measure $\mu$ (ML $\mu$-test) is a sequence $( U_n)_{n \in \mathbb{N}}$ of c.e. open sets uniformly in $n$ such that $\mu(U_n) \leq 2^{-n}$ for all $n \in \mathbb{N}$. A set $N \subseteq \{ 0, 1 \}^{\mathbb{N}}$ is called a Martin-L{\"o}f null set with respect to $\mu$ (ML $\mu$-null set) if there is a Martin-L{\"o}f test $( U_n)_{n \in \mathbb{N}}$ such that $N \subseteq \bigcap_{n \in \mathbb{N}} U_n$. A sequence $x \in \{ 0, 1 \}^{\mathbb{N}}$ is Martin-L{\"o}f random with respect to $\mu$ (ML $\mu$-random) if $\{ x \}$ is not a Martin-L{\"o}f null set. $\textsf{MLR}_{\mu}$ denotes the set of ML $\mu$-random sequences.
\end{defin}

Martin-L\"of randomness satisfies the first requirement. Indeed, there are only countably many ML tests because there are only countably many c.e. sets. Since the union of all ML $\mu$-null sets is a null set with respect to $\mu$ from Proposition \ref{prop:null}, $\mu$-almost every sequence is ML $\mu$-random.

\begin{thm} \label{thm:full}
$\mu ( \sf{MLR}_{\mu} ) =1$.
\end{thm}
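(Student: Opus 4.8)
The plan is to show that the complement of $\sf{MLR}_\mu$ is a $\mu$-null set, hence has measure zero, so that $\mu(\sf{MLR}_\mu)=1$. The complement is, by definition, the union of all ML $\mu$-null sets. The first step is to observe that, since each ML $\mu$-test is a uniformly c.e. sequence of open sets and there are only countably many partial computable functions, there are only countably many ML $\mu$-tests; enumerate them as $(U^{(e)}_n)_{n\in\mathbb{N}}$ for $e\in\mathbb{N}$. Correspondingly the ML $\mu$-null sets are countably many, say $(N_e)_{e\in\mathbb{N}}$ with $N_e\subseteq\bigcap_n U^{(e)}_n$ and $\mu(U^{(e)}_n)\le 2^{-n}$. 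The complement of $\sf{MLR}_\mu$ is exactly $\bigcup_{e\in\mathbb{N}}N_e$.

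Next I would invoke Proposition \ref{prop:null}: a countable union of null sets (in the unrestricted measure-theoretic sense of the Definition of null set) is again a null set. Each $N_e$ is in particular a null set in that sense, witnessed by the open cover $(U^{(e)}_n)_n$. Hence $\bigcup_e N_e$ is a null set with respect to $\mu$, so there is a sequence $(V_n)_n$ of open sets with $\bigcup_e N_e\subseteq\bigcap_n V_n$ and $\mu(V_n)\le 2^{-n}$. Then $\mu\left(\bigcup_e N_e\right)\le\mu(V_n)\le 2^{-n}$ for every $n$, whence $\mu\left(\bigcup_e N_e\right)=0$. Therefore $\mu(\sf{MLR}_\mu)=1-\mu\left(\bigcup_e N_e\right)=1$, using that $\sf{MLR}_\mu$ is measurable — which it is, being the complement of a countable union of Borel (indeed open-delimited $G_\delta$) sets.

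For the explicit proof of Proposition \ref{prop:null} itself (if one wants the argument self-contained rather than citing it), the standard device is the following: given null sets $N_e$ with witnessing covers $(U^{(e)}_n)_n$ satisfying $\mu(U^{(e)}_n)\le 2^{-n}$, set $W_n=\bigcup_{e\in\mathbb{N}}U^{(e)}_{n+e+1}$. Each $W_n$ is open, $\bigcup_e N_e\subseteq\bigcap_e\bigcap_n U^{(e)}_n\subseteq\bigcap_n W_n$, and by countable subadditivity $\mu(W_n)\le\sum_{e}2^{-(n+e+1)}=2^{-n}$. This exhibits $\bigcup_e N_e$ as a null set.

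The computation here is entirely routine; the only point that needs care — and is the nominal ``main obstacle'' — is the countability of the family of ML $\mu$-tests, which rests on the fact that uniform c.e.-ness of $(U_n)_n$ is witnessed by a single partial computable function (equivalently, by an index $e$ for such a function via the universal Turing machine), and there are only countably many such indices. Note that not every index yields a genuine ML $\mu$-test (the condition $\mu(U_n)\le 2^{-n}$ may fail), but that only makes the family of tests smaller, so countability is preserved; one does not need an effective enumeration of the valid tests for this measure-theoretic statement, merely a countable superset.
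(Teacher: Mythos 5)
Your argument is correct and is essentially the paper's own: the paper delegates the formal proof to Corollary 6.2.6 of \cite{dh}, but the paragraph immediately preceding Theorem \ref{thm:full} sketches exactly your route — countably many c.e.\ sets, hence countably many ML tests and ML null sets, whose union is a null set by Proposition \ref{prop:null}. Your additional self-contained proof of Proposition \ref{prop:null} via $W_n=\bigcup_e U^{(e)}_{n+e+1}$ is the standard witness and is also correct.
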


\begin{proof}
See Corollary 6.2.6 in \cite{dh}.
\end{proof}

To determine whether a given sequence is random or not, we have to implement a countable number of ML tests. However, the existence of a universal Turing machine implies that the union of all ML $\mu$-tests is also a ML $\mu$-test. Such a test is called universal Martin-L\"of $\mu$-test.

\begin{thm}[universal Martin-L\"of test]
There exists a Martin-L\"of test $\{ U_n \}_{n \in \mathbb{N}}$ with respect to $\mu$ such that for any ML $\mu$-test $\{ V_n \}_{n \in \mathbb{N}}$, $\bigcap_{n \in \mathbb{N}} V_n \subseteq \bigcap_{n \in \mathbb{N}} U_n$.
\end{thm}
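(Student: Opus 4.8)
The plan is to build the universal test by effectively listing \emph{all} candidate ML $\mu$-tests, mechanically ``trimming'' each one so that it genuinely meets the measure bound $2^{-n}$, and then forming a suitably weighted union of the trimmed families. First I would invoke the universal Turing machine to fix a family $(W_{e,n})_{e,n \in \mathbb{N}}$ of sets of strings $W_{e,n} \subseteq \{0,1\}^{<\mathbb{N}}$ that is c.e.\ uniformly in the pair $(e,n)$ and is \emph{universal}: every sequence $(S_n)_n$ of sets of strings which is c.e.\ uniformly in $n$ occurs as $(W_{e,n})_n$ for some index $e$. I would then define a trimmed family $(G_{e,n})$ by running the enumeration of $W_{e,n}$ and admitting a newly listed string $\sigma$ into $G_{e,n}$ exactly when the measure $\mu\big([\,G_{e,n}^{\mathrm{cur}} \cup \{\sigma\}\,]\big)$ of the resulting finite union of cylinders is \emph{strictly} below $2^{-n}$. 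Since $\mu$ is a computable measure, the measure of a finite union of cylinders is a computable real uniformly in that union, so the relation ``$\mu([F]) < 2^{-n}$'' is c.e.\ in $(F,n)$; hence $(G_{e,n})$ is again c.e.\ open uniformly in $(e,n)$, and by construction $\mu([G_{e,n}]) \le 2^{-n}$ for all $e,n$. Crucially, if the original family already satisfies the \emph{strict} bound $\mu([W_{e,n}]) < 2^{-n}$ for every $n$, then every finite subset of $W_{e,n}$ has measure $< 2^{-n}$, so no string is ever rejected and $[G_{e,n}] = [W_{e,n}]$.

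Next I would put
\begin{align}
 U_n \ := \ \bigcup_{e \in \mathbb{N}} \big[\, G_{e,\, n+e+1} \,\big].
\end{align}
This is c.e.\ open uniformly in $n$, being an effective union over $e$ of uniformly c.e.\ open sets, and by countable subadditivity
\begin{align}
 \mu(U_n) \ \le \ \sum_{e \in \mathbb{N}} \mu\big(\, [\, G_{e,\, n+e+1} \,] \,\big) \ \le \ \sum_{e \in \mathbb{N}} 2^{-(n+e+1)} \ = \ 2^{-n},
\end{align}
so $(U_n)_n$ is a genuine ML $\mu$-test.

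It remains to check universality. Given an arbitrary ML $\mu$-test $(V_n)_n$, pass to the shifted sequence $V'_n := V_{n+1}$; it satisfies the \emph{strict} bound $\mu(V'_n) \le 2^{-n-1} < 2^{-n}$ and is still c.e.\ open uniformly in $n$, so it is of the form $([W_{e,n}])_n$ for some $e$. By the first paragraph, $[G_{e,n}] = [W_{e,n}] = V'_n$ for all $n$, whence
\begin{align}
 \bigcap_{n} U_n \ \supseteq \ \bigcap_{n} [\, G_{e,\, n+e+1} \,] \ = \ \bigcap_{n} V'_{\,n+e+1} \ = \ \bigcap_{m \ge e+2} V_m \ \supseteq \ \bigcap_{m} V_m,
\end{align}
which is the desired inclusion.

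I expect the only delicate point to be the trimming step: the condition ``$\mu([F]) \le 2^{-n}$'' is merely co-c.e., not decidable, so one cannot simply reject strings that would push the measure up to exactly $2^{-n}$. The way around this is to run the construction with the \emph{strict} inequality ``$< 2^{-n}$'', which is genuinely c.e.\ because $\mu$ is computable, and then to recover the statement for an arbitrary test by the cost-free reindexing $V_n \mapsto V_{n+1}$; the remainder is routine bookkeeping about uniformity of the enumerations.
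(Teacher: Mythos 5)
Your proof is correct: it is the standard construction of a universal Martin-L\"of test (an effective enumeration $(W_{e,n})$ of all candidate tests, measure-trimming via the c.e.\ relation $\mu([F])<2^{-n}$, and the weighted union $U_n=\bigcup_e[G_{e,n+e+1}]$), which is exactly the argument in the references (Nies, Fact 3.2.4; Downey--Hirschfeldt, Theorem 6.2.5) that the paper cites in place of giving its own proof. Your treatment of the one delicate point --- trimming with the strict inequality, which is semi-decidable since $\mu$ is computable, and recovering arbitrary tests by the cost-free shift $V_n\mapsto V_{n+1}$ --- is the standard and correct resolution.
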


\begin{proof}
See Fact 3.2.4 in \cite{nies} or Theorem 6.2.5 in \cite{dh}
\end{proof}

ML randomness also satisfies the second requirement. If $x \in \{ 0, 1 \}^{\mathbb{N}}$ is computable, then $x$ is not ML $\lambda$-random because $(U_n)_{n \in \mathbb{N}} = ([ x \upharpoonright n ])_{n \in \mathbb{N}}$ is a ML $\lambda$-test. We remark that even $z = 0^{\mathbb{N}}$ is ML random with respect to the Dirac measure $\delta_z$ concentrated on $z$.

The notion of randomness can be extended to objects in $\{0, 1\}^{\mathbb{Z}}$ and $\{ 0, 1 \}^{\mathbb{N}} \times \{ 0, 1 \}^{\mathbb{N}}$. We fix a bijective coding $\iota: \{0, 1\}^{\mathbb{Z}} \to \{ 0, 1 \}^{\mathbb{N}}$ in the following. We assign a two-sided infinite binary sequence $x = \dots x(-1)x(0)x(1) \dots \in \{0, 1\}^{\mathbb{Z}}$ to a one-sided infinite binary sequence
\begin{align}
  \iota(x) = x(0)x(-1)x(1)x(-2)x(2) \dots \in \{ 0, 1 \}^{\mathbb{N}}.
\end{align}
We let $\mu$ be a computable probability measure on $\{0, 1\}^{\mathbb{Z}}$. We say that $x \in \{0, 1\}^{\mathbb{Z}}$ is Martin-L\"of random with respect to $\mu$ if $\iota(x)$ is ML random with respect to $\mu \circ \iota^{-1}$. Similarly, we define a coding function from $\{ 0, 1 \}^{\mathbb{N}} \times \{ 0, 1 \}^{\mathbb{N}}$ to $\{ 0, 1 \}^{\mathbb{N}}$ as
\begin{align}
 \kappa(x,y) = x(0) y(0) x(1) y(1) \dots \in \{ 0, 1 \}^{\mathbb{N}} \ \text{for} \ (x,y) \in \{ 0, 1 \}^{\mathbb{N}} \times \{ 0, 1 \}^{\mathbb{N}}.
\end{align}
For a computable probability measure $\mu$ on $\{ 0, 1 \}^{\mathbb{N}} \times \{ 0, 1 \}^{\mathbb{N}}$, $(x,y) \in \{ 0, 1 \}^{\mathbb{N}} \times \{ 0, 1 \}^{\mathbb{N}}$ is called Martin-L\"of random with respect to $\mu$ if $\kappa(x,y)$ is ML random with respect to $\mu \circ \kappa^{-1}$.

There is an alternative characterization of Martin-L{\"o}f randomness in term of Solovay tests. We use this equivalence to prove the effective law of large numbers.

\begin{defin}[Solovay random] \label{defin:solovay}
Let $\mu$ be a computable probability measure. A \textit{Solovay test} is a sequence $(S_n)_{n \in \mathbb{N}}$ of c.e. open sets uniformly in $n$ such that $\sum_n \mu (S_n) < \infty$. $x \in \{ 0, 1 \}^{\mathbb{N}}$ is \textit{Solovay $\mu$-random} if $x$ is in only finitely many $S_n$.
\end{defin}

\begin{prop}
An element $x \in \{ 0, 1 \}^{\mathbb{N}}$ is Martin-L{\"o}f $\mu$-random iff it is Solovay $\mu$-random.
\end{prop}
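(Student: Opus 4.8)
The plan is to prove the two implications separately by contraposition, using a Solovay test to witness the failure of Martin-L\"of randomness in the substantive direction.

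\smallskip
\noindent\textbf{Easy direction ($\Leftarrow$).} Suppose $x$ is not ML $\mu$-random, witnessed by an ML $\mu$-test $(U_n)_{n\in\mathbb N}$ with $x\in\bigcap_n U_n$. Since $\sum_n\mu(U_n)\le\sum_n 2^{-n}<\infty$, the very same sequence $(U_n)_{n}$ is a Solovay test, and $x$ lies in every $U_n$, hence in infinitely many of them. Thus $x$ is not Solovay $\mu$-random. This is immediate and I expect no difficulty here.

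\smallskip
\noindent\textbf{Hard direction ($\Rightarrow$).} Suppose $x$ is not Solovay $\mu$-random, witnessed by a Solovay test $(S_n)_{n\in\mathbb N}$ with $x\in S_n$ for infinitely many $n$. The idea is to turn ``being in many of the $S_n$'' into a genuine ML test. For $k\in\mathbb N$ let $V_k$ be the set of sequences belonging to at least $k$ of the sets $S_n$, that is, $V_k=\bigcup\{\,S_{n_0}\cap\cdots\cap S_{n_{k-1}} : n_0<n_1<\cdots<n_{k-1}\,\}$. A routine argument in Cantor space (a finite intersection of c.e.\ open sets is c.e.\ open, effectively, and a c.e.\ union of these is again c.e.\ open) shows that $(V_k)_{k\in\mathbb N}$ is a nested, uniformly c.e.\ open sequence. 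Moreover $\bigcap_k V_k$ is precisely the set of sequences lying in infinitely many $S_n$, so $x\in\bigcap_k V_k$.

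\smallskip
\noindent\textbf{The measure bound and its normalization (the main obstacle).} Writing $V_k=\{z: \sum_n\mathbf 1_{S_n}(z)\ge k\}$ and using monotone convergence together with Markov's inequality, I get $\mu(V_k)\le \frac1k\int\big(\sum_n\mathbf 1_{S_n}\big)\,d\mu=\frac1k\sum_n\mu(S_n)=\frac{c}{k}$, where $c:=\sum_n\mu(S_n)<\infty$ by hypothesis. This gives only $1/k$-decay rather than $2^{-k}$, and worse, $c$ need not be a computable real, so I cannot directly select a computable subsequence $V_{k_j}$ with $c/k_j\le 2^{-j}$. The fix is that I do not need a uniform construction: it suffices to exhibit \emph{some} ML $\mu$-test capturing $x$. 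So fix an integer $d\ge c$ (say $d=\lceil c\rceil$); this is a constant, obtained non-effectively, which is harmless. Then $(V_{d\cdot 2^j})_{j\in\mathbb N}$ is still uniformly c.e.\ open in $j$ (the reindexing $j\mapsto d\cdot 2^j$ is computable for this fixed $d$), nested, and satisfies $\mu(V_{d\cdot 2^j})\le d/(d\cdot 2^j)=2^{-j}$, hence it is an ML $\mu$-test. Since $x\in\bigcap_k V_k\subseteq\bigcap_j V_{d\cdot 2^j}$, the sequence $x$ is not ML $\mu$-random, completing the contrapositive. (Equivalently, one could first discard a finite initial segment of the test to arrange $\sum_n\mu(S_n)\le 1$, again by a non-effective choice of cut-off, and then take $d=1$.)
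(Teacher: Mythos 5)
Your proof is correct, and since the paper proves this proposition only by citation (to Nies, Prop.\ 3.2.19, and Downey--Hirschfeldt, Thm.\ 6.2.8), there is nothing to diverge from: your argument --- counting multiplicities via $V_k=\{z:\sum_n\mathbf 1_{S_n}(z)\ge k\}$, bounding $\mu(V_k)\le c/k$ by Markov, and absorbing the non-computable constant $c$ by a non-uniform choice of $d\ge c$ --- is exactly the standard one given in those references. The one point worth making explicit is why $V_k$ is c.e.\ open uniformly in $k$: $[\sigma]\subseteq S_{n_0}\cap\cdots\cap S_{n_{k-1}}$ is a c.e.\ condition by compactness of Cantor space, which is what makes the ``routine argument'' you invoke go through.
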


\begin{proof}
See Proposition 3.2.19 in \cite{nies} or Theorem 6.2.8 in \cite{dh}.
\end{proof}

\subsection{Robustness of Martin-L{\"o}f randomness}
\label{subsec:rob}

The definition of Martin-L\"of randomness refers to an underlying probability measure. Therefore, even if a sequence $x \in \{ 0, 1 \}^{\mathbb{N}}$ is ML $\mu$-random, it may not be random with respect to another measure $\nu$. A trivial example is a sequence that is $\lambda$-random but not random with respect to the Bernoulli measure $\mu_{1/3}$. However, ML randomness is a robust notion in the sense that it is preserved by simple transformations. Here a simple transformation means a computable function \textit{from $\{ 0, 1 \}^{\mathbb{N}}$ to $\{ 0, 1 \}^{\mathbb{N}}$}, which is defined through a partial computable monotone function on $\{ 0, 1 \}^{<\mathbb{N}}$. 

\begin{defin}
A partial function $f: \subseteq \{ 0, 1 \}^{<\mathbb{N}} \to \{ 0, 1 \}^{<\mathbb{N}}$ is \textit{monotone} if
\begin{align}
  \sigma \sqsubseteq \tau \Rightarrow f(\sigma) \sqsubseteq f(\tau)
\end{align}
holds for all $\sigma, \tau \in \mathrm{dom}(f)$. For a partial monotone function $f: \subseteq \{ 0, 1 \}^{<\mathbb{N}} \to \{ 0, 1 \}^{<\mathbb{N}}$, we define a partial function $\hat{f}:\subseteq \{ 0, 1 \}^{\mathbb{N}} \to \{ 0, 1 \}^{\mathbb{N}}$ as
\begin{align}
 \hat{f}(x) = \begin{cases}
 \bigcup_{\sigma \sqsubseteq x} f(\sigma) & \text{if} \ \sup \{ |\sigma| : \sigma \sqsubseteq x, \sigma \in \mathrm{dom}(f) \} = \infty \\
 \text{undefined} & \text{otherwise}
 \end{cases}
\end{align}
for $x \in \{ 0, 1 \}^{\mathbb{N}}$. A partial function $F: \subseteq \{ 0, 1 \}^{\mathbb{N}} \to \{ 0, 1 \}^{\mathbb{N}}$ is \textit{computable} if there exists a partial computable monotone function $f: \subseteq \{ 0, 1 \}^{<\mathbb{N}} \to \{ 0, 1 \}^{<\mathbb{N}}$ with $F = \hat{f}$. If $\mathrm{dom}(F) = \{ 0, 1 \}^{\mathbb{N}}$, the function $F$ is called \textit{total}. Hereafter, we consider only total (computable) functions on $\{ 0, 1 \}^{\mathbb{N}}$.
\end{defin}

We define an image measure $\mu F^{-1}$ of $\mu$ under a function $F: \{ 0, 1 \}^{\mathbb{N}} \to \{ 0, 1 \}^{\mathbb{N}}$ as
\begin{align}
 (\mu F^{-1})(A) = \mu \left( F^{-1}(A) \right)
\end{align}
for any measurable set $A \in \mathcal{B}$. Then, computable functions on $\{ 0, 1 \}^{\mathbb{N}}$ preserve the computability of probability measures and the ML randomness.

\begin{prop}
Let $\mu$ be a measure on $\{ 0, 1 \}^{\mathbb{N}}$ and $F: \{ 0, 1 \}^{\mathbb{N}} \to \{ 0, 1 \}^{\mathbb{N}}$ be a total computable function. If $\mu$ is computable, $\mu F^{-1}$ is computable.
\end{prop}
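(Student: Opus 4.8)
The plan is to identify the premeasure inducing $\mu F^{-1}$ and show it is computable. Set $\rho'(\tau) = (\mu F^{-1})([\tau]) = \mu(F^{-1}([\tau]))$ for $\tau \in \{0,1\}^{<\mathbb{N}}$. Since $F$ is total we have $F^{-1}(\{0,1\}^{\mathbb{N}}) = \{0,1\}^{\mathbb{N}}$, so $\rho'(\square) = \mu(\{0,1\}^{\mathbb{N}}) = 1$, and $\rho'(\tau 0) + \rho'(\tau 1) = \rho'(\tau)$ by disjointness of $F^{-1}([\tau 0])$ and $F^{-1}([\tau 1])$; hence $\rho'$ is the probability premeasure inducing $\mu F^{-1}$, and the proposition amounts to showing that $\rho'$ is computable as a real-valued function, i.e., that $\tau \mapsto \rho'(\tau)$ is both lower and upper semicomputable uniformly in $\tau$.

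First I would produce a uniformly c.e. open description of the preimage cylinders. Write $F = \hat f$ for a partial computable monotone $f : \subseteq \{0,1\}^{<\mathbb{N}} \to \{0,1\}^{<\mathbb{N}}$. Because $F$ is total, $F(x)$ is an infinite sequence equal to the increasing union $\bigcup \{ f(\sigma) : \sigma \sqsubseteq x, \ \sigma \in \mathrm{dom}(f) \}$; hence, for a string $\tau$, one has $\tau \sqsubseteq F(x)$ if and only if there is $\sigma \sqsubseteq x$ with $\sigma \in \mathrm{dom}(f)$ and $\tau \sqsubseteq f(\sigma)$ (the forward direction uses that totality forces some prefix of $x$ to be mapped by $f$ to a string of length at least $|\tau|$, which by monotonicity then has $\tau$ as a prefix). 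Therefore $F^{-1}([\tau]) = [S_\tau]$ with $S_\tau = \{ \sigma \in \mathrm{dom}(f) : \tau \sqsubseteq f(\sigma) \}$, and $(S_\tau)_\tau$ is c.e. uniformly in $\tau$ since $f$ is partial computable.

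Next, lower semicomputability. For any c.e. set of strings $S$, enumerating $S = \{ \sigma_0, \sigma_1, \dots \}$ and reducing each finite union $[\sigma_0] \cup \dots \cup [\sigma_k]$ to a disjoint union of cylinders shows that $\mu([\sigma_0] \cup \dots \cup [\sigma_k])$ is a computable real, namely a finite sum of values of the computable premeasure $\rho$ of $\mu$; these numbers are nondecreasing in $k$ and converge to $\mu([S])$, so $\mu([S])$ is lower semicomputable, uniformly in a c.e. index for $S$. Applying this to $S_\tau$ shows $\tau \mapsto \rho'(\tau)$ is lower semicomputable uniformly in $\tau$. Upper semicomputability then follows from the totality of $F$: for each $n$ the cylinders $\{ [\tau] : |\tau| = n \}$ partition $\{0,1\}^{\mathbb{N}}$, so the preimages $\{ F^{-1}([\tau]) : |\tau| = n \}$ do as well, whence $\sum_{|\tau| = n} \rho'(\tau) = 1$; thus for $|\tau| = n$ we have $\rho'(\tau) = 1 - \sum_{|\tau'| = n,\ \tau' \neq \tau} \rho'(\tau')$, a finite sum of lower semicomputable quantities, so $\rho'(\tau)$ is upper semicomputable uniformly in $\tau$. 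Combining the two, $\rho'$ is computable, hence $\mu F^{-1}$ is a computable measure.

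I expect the main obstacle to be the first step: establishing the equivalence $\tau \sqsubseteq F(x) \iff \exists \sigma \sqsubseteq x\ (\sigma \in \mathrm{dom}(f) \wedge \tau \sqsubseteq f(\sigma))$ cleanly, since this is where the definition of $\hat f$, monotonicity of $f$, and totality of $F$ all interact. Once the uniformly c.e. open representation $F^{-1}([\tau]) = [S_\tau]$ is in hand, the remaining arguments are routine, the one genuine point being that upper semicomputability is \emph{not} automatic for arbitrary c.e. open sets and really requires the observation that the values $\rho'(\tau)$ over a fixed level $|\tau| = n$ sum to one.
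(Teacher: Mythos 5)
Your proof is correct. The paper itself gives no argument for this proposition---it simply cites Lemma~2.6 of Bienvenu and Porter---so there is no in-paper route to compare against; what you have written is the standard self-contained proof, and both of the points you flag as delicate are handled properly: the equivalence $\tau \sqsubseteq F(x) \iff \exists\,\sigma \sqsubseteq x\,(\sigma \in \mathrm{dom}(f) \wedge \tau \sqsubseteq f(\sigma))$ does use exactly the combination of monotonicity and totality you describe, and the complementation trick over the level $\lvert\tau\rvert = n$ is indeed where totality of $F$ is needed a second time to upgrade lower semicomputability to full computability.
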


\begin{proof}
See Lemma 2.6 in \cite{bp}.
\end{proof}

\begin{thm}[conservation of ML randomness]
\label{thm:cons}
Let $\mu$ be a computable probability measure on $\{ 0, 1 \}^{\mathbb{N}}$ and $F:  \{ 0, 1 \}^{\mathbb{N}} \to \{ 0, 1 \}^{\mathbb{N}}$ a total computable function. If $x \in \{ 0, 1 \}^{\mathbb{N}}$ is ML $\mu$-random, then $F(x)$ is ML $\mu F^{-1}$-random.
\end{thm}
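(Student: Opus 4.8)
The plan is to argue by contraposition: assuming $F(x)$ fails to be ML $\mu F^{-1}$-random, I will exhibit a Martin-L\"of $\mu$-test that captures $x$, contradicting the randomness of $x$. So suppose $(V_n)_{n\in\mathbb{N}}$ is an ML $\mu F^{-1}$-test with $F(x)\in\bigcap_n V_n$, and write $V_n=[S_n]$ where $(S_n)_{n\in\mathbb{N}}$ is c.e.\ uniformly in $n$. The candidate test is $U_n:=F^{-1}(V_n)$. Three things must be verified: that $(U_n)_{n\in\mathbb{N}}$ is c.e.\ open uniformly in $n$; that $\mu(U_n)\le 2^{-n}$; and that $x\in\bigcap_n U_n$. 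The last is immediate, since $F(x)\in V_n$ means exactly $x\in F^{-1}(V_n)$. The measure bound is also immediate from the definition of the image measure: $\mu(U_n)=\mu\bigl(F^{-1}(V_n)\bigr)=(\mu F^{-1})(V_n)\le 2^{-n}$.

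The only real work is the first point, which reduces to describing $F^{-1}$ on basic open sets. Let $f:\subseteq\{0,1\}^{<\mathbb{N}}\to\{0,1\}^{<\mathbb{N}}$ be a partial computable monotone function with $F=\hat f$. I claim that for every string $\sigma$,
\begin{align}
 F^{-1}([\sigma]) = \bigl[\,\{\,\tau\in\mathrm{dom}(f): \sigma\sqsubseteq f(\tau)\,\}\,\bigr].
\end{align}
Indeed, if $\tau\sqsubseteq x$ and $\tau\in\mathrm{dom}(f)$ with $\sigma\sqsubseteq f(\tau)$, then $\sigma\sqsubseteq f(\tau)\sqsubseteq\hat f(x)=F(x)$, so $x\in F^{-1}([\sigma])$. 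Conversely, if $F(x)\in[\sigma]$, then, because $F$ is total, $\sup\{\,|f(\tau)|:\tau\sqsubseteq x,\ \tau\in\mathrm{dom}(f)\,\}=\infty$, so there is $\tau\sqsubseteq x$ in $\mathrm{dom}(f)$ with $|f(\tau)|\ge|\sigma|$; monotonicity gives $f(\tau)\sqsubseteq F(x)$, hence $\sigma\sqsubseteq f(\tau)$. Since $f$ is computable, the set $\{(\sigma,\tau):\tau\in\mathrm{dom}(f),\ \sigma\sqsubseteq f(\tau)\}$ is c.e., so $F^{-1}([\sigma])$ is c.e.\ open uniformly in $\sigma$. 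Then
\begin{align}
 U_n = F^{-1}([S_n]) = \bigcup_{\sigma\in S_n} F^{-1}([\sigma]) = \bigl[\,\{\,\tau\in\mathrm{dom}(f): \exists\sigma\in S_n\ \sigma\sqsubseteq f(\tau)\,\}\,\bigr],
\end{align}
and the string set on the right is c.e.\ uniformly in $n$ because $(S_n)_{n\in\mathbb{N}}$ is and $f$ is partial computable. Hence $(U_n)_{n\in\mathbb{N}}$ is an ML $\mu$-test with $x\in\bigcap_n U_n$, so $x$ is not ML $\mu$-random, which proves the contrapositive.

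I expect the main obstacle to be purely bookkeeping rather than conceptual: one must use totality of $F$ crucially in the identity for $F^{-1}([\sigma])$ (without it, a sequence $x$ having only finitely many initial segments in $\mathrm{dom}(f)$ would have $\hat f(x)$ undefined and the inclusion would break), and one must keep every enumeration uniform in the index $n$ so that the resulting family is genuinely a single Martin-L\"of test and not merely a countable collection of tests. No idea beyond the conservation principle itself is required; alternatively, the same argument goes through verbatim with Solovay tests, pulling back a Solovay $\mu F^{-1}$-test $(S_n)_{n\in\mathbb{N}}$ to the Solovay $\mu$-test $(F^{-1}(S_n))_{n\in\mathbb{N}}$, for which $\sum_n\mu(F^{-1}(S_n))=\sum_n(\mu F^{-1})(S_n)<\infty$.
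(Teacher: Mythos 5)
Your proof is correct: the paper itself only cites Bienvenu--Porter for this theorem, and your pull-back argument (transporting an ML $\mu F^{-1}$-test capturing $F(x)$ to an ML $\mu$-test capturing $x$, with the uniform c.e.\ openness of $F^{-1}([\sigma])$ established via the monotone machine representation and the totality of $F$) is exactly the standard proof given in that reference. No gaps; the point you flag about totality being needed for the reverse inclusion in the identity for $F^{-1}([\sigma])$ is indeed the one place where care is required.
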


\begin{proof}
See Theorem 3.2 in \cite{bp}.
\end{proof}

Martin-L\"of randomness has robustness in another sense. It has other characterizations in terms of incompressibility and unpredictability. Let us consider a sequence. If the sequence has a simple structure, it can be compressed into a shorter length one by using an algorithm. For instance, $(01)^{100000}$ can be transformed to the shorter program ``output one hundred thousand $01$s ''. Conversely, if the sequence is ``random'', there is no simple description of it. This consideration leads to the idea of the Kolmogorov complexity \cite{solomonoff1,solomonoff2,kolmogorov}. Here we use a prefix-free version of the Kolmogorov complexity for technical reasons.

\begin{defin}[prefix-free computable function]
A set of strings $A \subseteq \{ 0, 1\}^{< \mathbb{N}}$ is called \textit{prefix-free} if for any two distinct elements $\sigma$ and $\tau$ in $A$, $\sigma$ is not a prefix of $\tau$. A partial computable function $f: \subseteq \{ 0, 1\}^{< \mathbb{N}} \to \{ 0, 1\}^{< \mathbb{N}}$ is \textit{prefix-free} if $\mathrm{dom}(f)$ is prefix-free.
\end{defin}

The Kolmogorov complexity of a string $\sigma \in \{ 0, 1 \}^{<\mathbb{N}}$ with respect to a prefix-free computable function $f$ is defined as the length of a shortest program (string) $\tau \in \{ 0, 1 \}^{<\mathbb{N}}$ with $f(\tau) = \sigma$. That is,
\begin{align}
 K_f(\sigma) = \min \{ | \tau | : f(\tau) = \sigma \},
\end{align}
where the minimum is taken to be $\infty$ if the set after the ``$\min$'' is empty. The Kolmogorov complexity of a string $\sigma$ depends on the underlying prefix-free computable function $f$. However, there exists an optimal prefix-free computable function $U$ in the sense that if for any partial computable function $f: \subseteq \{ 0, 1 \}^{<\mathbb{N}} \to \{ 0, 1 \}^{<\mathbb{N}}$, there is a positive constant $c_f < \infty$ such that for all $\sigma \in \{ 0, 1 \}^{<\mathbb{N}}$,
\begin{align}
\label{eq:pop}
 K_U(\sigma) \leq K_f(\sigma) + c_f.
\end{align}
Thus, if a string can hardly be compressed by an optimal function, then the string cannot be compressed by any computable function. In other words, the Kolmogorov complexity is an intrinsic property of strings.

\begin{defin}[prefix-free Kolmogorov complexity]
We fix an optimal prefix-free computable function $U$ and define the \textit{prefix-free Kolmogorov complexity} $K(\sigma)$ of a string $\sigma \in \{ 0, 1 \}^{<\mathbb{N}}$ as $K(\sigma) = K_U(\sigma)$.
\end{defin}

According to the following theorem, the Kolmogorov complexity provides a characterization of randomness in terms of incompressibility. That is to say, the measure-theoretic typicalness of a sequence is equivalent to the incompressibility of it in the sense of the Kolmogorov complexity.

\begin{thm}[equivalence between ML randomness and complexity randomness]
\label{thm:mlc}
Let $\mu$ be a computable probability measure. A binary sequence $x \in \{ 0, 1 \}^{\mathbb{N}}$ is Martin-L\"of random with respect to $\mu$ if and only if there exists a positive constant $c$ such that for all $n$
\begin{align}
 K(x \upharpoonright n) > - \log \mu([x \upharpoonright n]) -c.
\end{align}
\end{thm}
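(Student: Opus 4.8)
This is the Levin--Schnorr theorem, and the plan is to prove the two implications separately, each by contraposition. For the forward (``only if'') direction I would show: if for \emph{every} constant $c$ there is an $n$ with $K(x \upharpoonright n) \le -\log\mu([x \upharpoonright n]) - c$, then $x$ is not ML $\mu$-random. The point is to read the hypothesis as membership in an explicit Martin-L\"of $\mu$-test. For $d \in \mathbb{N}$ set
\[
  V_d = \bigl\{ \sigma \in \{0,1\}^{<\mathbb{N}} : K(\sigma) \le -\log\mu([\sigma]) - d \bigr\},
\]
and let $W_d$ consist of those $\sigma \in V_d$ having no proper prefix in $V_d$, so that $W_d$ is an antichain and $[W_d] = [V_d]$. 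Since $\mu$ is a computable measure the real $-\log\mu([\sigma])$ is computable uniformly in $\sigma$, and $\{(\sigma,k) : K(\sigma) \le k\}$ is c.e. because $K$ is the prefix-free complexity; hence $(W_d)_d$ is c.e. open uniformly in $d$. Because $\mathrm{dom}(U)$ is prefix-free, Kraft's inequality gives $\sum_\sigma 2^{-K(\sigma)} \le 1$, so
\[
  \mu([W_d]) = \sum_{\sigma \in W_d} \mu([\sigma]) \le \sum_{\sigma \in W_d} 2^{-K(\sigma)-d} \le 2^{-d},
\]
and $(W_d)_d$ is a ML $\mu$-test. Under the hypothesis $x \in \bigcap_d [W_d]$, so $\{x\}$ is a ML $\mu$-null set. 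Cylinders of $\mu$-measure zero are harmless here: $\{\sigma : \mu([\sigma]) > 0\}$ is c.e. since $\mu([\sigma])$ is a computable real, such $\sigma$ add nothing to $\mu([W_d])$, and a prefix of $x$ of $\mu$-measure zero would make the complexity bound fail outright.

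For the backward (``if'') direction I would again argue the contrapositive: assuming $x$ is not ML $\mu$-random, I produce for every $c$ an $m$ with $K(x \upharpoonright m) \le -\log\mu([x \upharpoonright m]) - c$. Fix a universal ML $\mu$-test $(U_n)_n$ and write $U_n = [S_n]$ with $(S_n)_n$ uniformly c.e.; a routine normalization lets me take each $S_n$ to be a prefix-free c.e. set with $\mu([S_n]) \le 2^{-n}$, so that $\mu([\sigma]) \le 2^{-n}$ for every $\sigma \in S_n$ and $x \in \bigcap_n [S_n]$. Assuming --- as I may, since otherwise the conclusion is immediate --- that every prefix of $x$ has positive $\mu$-measure, fix integers $\ell(\sigma)$, computable uniformly in $\sigma$ over the c.e. set $\{\sigma : \mu([\sigma]) > 0\}$, with $-\log\mu([\sigma]) \le \ell(\sigma) \le -\log\mu([\sigma]) + O(1)$. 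Feed the c.e. list of requests $\{ (\ell(\sigma) - \lfloor n/2 \rfloor,\, \sigma) : n \ge 1,\ \sigma \in S_n,\ \mu([\sigma]) > 0 \}$ (all requested lengths are positive integers, since $-\log\mu([\sigma]) \ge n$) to the Kraft--Chaitin theorem; its total weight is
\[
  \sum_{n \ge 1}\ \sum_{\sigma \in S_n,\, \mu([\sigma]) > 0} 2^{\lfloor n/2 \rfloor - \ell(\sigma)} \;\le\; \sum_{n \ge 1} 2^{\lfloor n/2 \rfloor} \sum_{\sigma \in S_n} \mu([\sigma]) \;\le\; \sum_{n \ge 1} 2^{\lfloor n/2 \rfloor - n} \;\le\; \sum_{n \ge 1} 2^{-n/2} \;<\; \infty ,
\]
so after uniformly adding a fixed constant to all requested lengths the total weight is at most $1$, and Kraft--Chaitin produces a prefix-free machine for which, using the optimality of $U$, $K(\sigma) \le -\log\mu([\sigma]) - \lfloor n/2 \rfloor + O(1)$ for all $n$ and all $\sigma \in S_n$. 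For each $n \ge 1$, since $x \in [S_n]$ there is $\sigma_n \sqsubseteq x$ in $S_n$, whence $K(x \upharpoonright |\sigma_n|) \le -\log\mu([x \upharpoonright |\sigma_n|]) - \lfloor n/2 \rfloor + O(1)$; as $n$ ranges over $\mathbb{N}$ the quantity $\lfloor n/2 \rfloor - O(1)$ is unbounded, so the inequality of Theorem~\ref{thm:mlc} fails for every $c$.

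The forward direction is essentially bookkeeping once one notes that $K$ is upper semicomputable and $\mu$ is computable; the real work is the backward direction, and within it the choice of request lengths. The binding constraint is that the total Kraft--Chaitin weight must be finite, and this is precisely what forces the discount $-\lfloor n/2 \rfloor$ (any fixed fraction strictly less than $1$ of the test level $n$ would serve): a naive discount of $-n$ would leave weight $\sum_n 2^n\mu([S_n])$, which need not converge. The remaining ingredients --- rewriting a c.e. open set as $[S_n]$ with $S_n$ a prefix-free c.e. set, replacing $-\log\mu([\sigma])$ by a uniformly computable integer upper bound, and invoking the Kraft--Chaitin theorem (see \cite{nies,dh}) to realize a weight-$\le 1$ request set by a prefix-free machine --- are standard. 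Finally, I would stress that the constant $c$ is genuinely non-uniform: it records which level of the universal test $x$ first escapes, and no computable bound on $c$ in terms of $x$ is available.
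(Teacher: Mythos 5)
Your proof is correct and is essentially the standard Levin--Schnorr argument that the paper itself does not reproduce but simply cites (Theorem~6.2.3 of \cite{dh}, extended to a general computable measure): a Kraft-inequality bound turns the compressible prefixes into a Martin-L\"of test, and the Kraft--Chaitin theorem with the $\lfloor n/2\rfloor$ discount turns a universal test into compressions. The only point worth flagging is that enumerating $V_d=\{\sigma: K(\sigma)\le -\log\mu([\sigma])-d\}$ requires semi-deciding a non-strict comparison between an upper semicomputable integer and a computable real, which one fixes in the usual way by working with strict inequality (shifting $d$ by one), at no cost to the argument.
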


\begin{proof}
See Theorem 6.2.3 in \cite{dh} for the case of the uniform measure. The extension to an arbitrary computable measure is straightforward.
\end{proof}

There is another characterization by unpredictability. If a sequence is random, the knowledge of the first $n$ bits of the sequence provides no useful information on the $(n+1)$-th bit. Therefore, if we bet on the future bits by utilizing the knowledge of the previous bits of the random sequence, there is no betting strategy by which we are able to make much money. A betting strategy is represented mathematically by a martingale, which is a crucial concept in the theory of stochastic process \cite{doob}. See \cite{nies,dh} for more details in the context of the theory of algorithmic randomness.

\section{Kac infinite chain model}
\label{sec:3}

We aim to understand how deterministic and irreversible macroscopic laws emerge from deterministic and reversible microscopic dynamics. The Kac ring model has often been used as an instructive model to demonstrate the macroscopic law as the law of large numbers \cite{kac,go,mns}. The model is also suitable as an example of an application of the randomness notion because the dynamical system is defined on infinite binary sequences.

\subsection{Model}
Let us consider the one-dimensional lattice $\mathbb{Z}$. For each site $i \in \mathbb{Z}$, there is one particle having a spin variable $\eta(i) \in \{ -1, 1\}$ and at most one scatterer. The occupation number of the scatterer at site $i$ is denoted by $y(i) \in \{ 0, 1\}$. For convenience, we set $x(i) = (1 + \eta(i))/2 \in \{ 0, 1\}$ for all $i \in \mathbb{Z}$ and think of them as dynamical variables. Then, a microscopic state of our model is represented by $(x,y) = (x(i), y(i))_{i \in \mathbb{Z}}$ and the state space is $\{ 0, 1\}^{\mathbb{Z}} \times \{ 0, 1\}^{\mathbb{Z}}$. A discrete-time deterministic dynamical system on $\{ 0, 1\}^{\mathbb{Z}} \times \{ 0, 1\}^{\mathbb{Z}}$ is defined by a function $\varphi: \{ 0, 1\}^{\mathbb{Z}} \times \{ 0, 1\}^{\mathbb{Z}} \to \{ 0, 1\}^{\mathbb{Z}} \times \{ 0, 1\}^{\mathbb{Z}}$ with
\begin{align}
\label{eq:dyn}
 \varphi(x,y)(i) = (x(i-1) + y(i-1) - 2 x(i-1) y(i-1), y(i)).
\end{align}
By using (\ref{eq:dyn}), we obtain the time evolution of the spin variables, $\eta(x,y)(i) = 2x(i)-1$, as
\begin{align}
\label{eq:spin}
 (\eta \circ \varphi^t)(x,y)(i) = [1-2y(i-1)] \dots [1-2y(i-t)] (2x(i-t)-1).
\end{align}
Thus, this dynamical system has the following interpretation. Let us prepare an initial configuration of spins and scatterers. For each time step, the configuration of scatterers remains unchanged and the particle at site $i$ jumps to the neighbor site $i+1$. Then, the spin $\eta(i)$ of the particle is flipped if the scatterer at site $i$ is present, $y(i)=1$, or it keeps its value if absent, $y(i)=0$.

The dynamical system is deterministic and invertible. In fact, the map
\begin{align}
 \varphi^{-1}(x,y)(i) = (x(i+1) + y(i) - 2 x(i+1) y(i), y(i))
\end{align}
is the inverse of $\varphi$. Obviously, this dynamics corresponds to jumps of particles to the left site. We discuss the details of the microscopic reversibility in section \ref{sec:4}. 

The dynamical system on $\{ 0, 1\}^{\mathbb{Z}} \times \{ 0, 1\}^{\mathbb{Z}}$ can be regarded as that on $\{ 0, 1 \}^{\mathbb{N}} \times \{ 0, 1 \}^{\mathbb{N}}$ and $\{ 0, 1 \}^{\mathbb{N}}$ by the encoding function $\iota$ and $\kappa$. Hereafter, $(x,y)$ (resp. $\varphi$) represents an element of $\{ 0, 1\}^{\mathbb{Z}} \times \{ 0, 1\}^{\mathbb{Z}}$, $\{ 0, 1 \}^{\mathbb{N}} \times \{ 0, 1 \}^{\mathbb{N}}$, or $\{ 0, 1 \}^{\mathbb{N}}$ (resp. the function on $\{ 0, 1\}^{\mathbb{Z}} \times \{ 0, 1\}^{\mathbb{Z}}$, $\{ 0, 1 \}^{\mathbb{N}} \times \{ 0, 1 \}^{\mathbb{N}}$, or $\{ 0, 1 \}^{\mathbb{N}}$) interchangeably. It is easy to show that if we think $\varphi$ as a function from $\{ 0, 1 \}^{\mathbb{N}}$ to $\{ 0, 1 \}^{\mathbb{N}}$, $\varphi$ is a total computable function on $\{ 0, 1 \}^{\mathbb{N}}$.

\begin{rem}
Our model is a variant of the Kac ring model \cite{kac}. The original model is defined on the ring of size $N$. We use the infinite chain model in this paper because the randomness notion in section \ref{sec:2} is sharply defined for infinite sequences. Therefore, Zermelo's recurrence paradox, which is a characteristic of finite systems, does not occur.

The model can be thought to be a dynamical system that consists of spin degrees of freedom with quenched scatterers because the configuration of scatterers does not change in time. In this paper, we include the scatterers in state variables for simplicity. See Remark \ref{rem:vl}
\end{rem}

\subsection{Measure-theoretic approach}

Let us imagine the situation we observe the system macroscopically. We introduce the following two macroscopic variables over $2N+1$ sites for a microscopic state $(x,y)$:
\begin{align}
\label{eq:macro}
 m_0^N (x,y)  = \frac{1}{2N+1} \sum_{i=-N}^N (2x(i)-1), \ \ m_1^N(x,y) = \frac{1}{2N+1} \sum_{i=-N}^N y(i).
\end{align}
If we observe the time evolution of the macroscopic variables $m(t) = (m_0(t), m_1(t))$, the variables obey a macroscopic law and relax to the equilibrium values. In fact, at each time step $t$, we assume that the up or down spins are scattered at a rate $m_1(t)$ for sufficiently large $N$. Then, the fraction of the up or down spins changes from $(1 \pm m_0(t))/2$ to $[1 \pm (1-2m_1(t))m_0(t)]/2$. Therefore, the average magnetization changes from $m_0(t)$ to $(1-2m_1(t))m_0(t)$. Because the average density of the scatterers is constant, the macroscopic law has the form $m(t) = \Phi^t(m(0))$ with $\Phi(m) = ([1-2m_1]m_0, m_1)$. 

This ``molecular chaos'' argument provides the form of the macroscopic law that the system should obey \textit{on average}. However, the hydrodynamic equations for fluids predict the macroscopic behavior of a single experiment, not just the ensemble average. The same holds true for this model. Suppose that initial microscopic states are sampled according to an initial probability measure corresponding to a nonequilibrium state. Then, the macroscopic law is understood as typical behavior with respect to the initial probability measure. This scenario is represented mathematically by the law of large numbers.

In statistical mechanics, if we have information on only the values of relevant macroscopic variables at the initial time, then one natural choice of an initial probability measure is the Gibbs measure corresponding to the initial macroscopic state \cite{zmr1,zmr2}. In the case of the Kac infinite chain model, the relevant macroscopic variables are the average magnetization $m_0$ and the average density of scatterers $m_1$. Then, the Gibbs measure in this case is the product of the Bernoulli measures $\mu_{(1+m_0)/2} \times \mu_{m_1}$ on $\{ 0, 1 \}^{\mathbb{Z}} \times \{ 0, 1 \}^{\mathbb{Z}}$, where $m=(m_0, m_1) \in [-1,1] \times [0,1]$ is an initial nonequilibrium state.

Under the above settings, the weak and strong laws of large numbers hold. Although the facts are widely known, we give complete proofs of the theorems in the following. Henceforth, we write $\mu_m = \mu_{(1+m_0)/2} \times \mu_{m_1}$ for notational simplicity and $\mathbb{E}[X]$ denotes the expectation value of a random variable $X$ with respect to $\mu_{m}$. For instance,
\begin{align}
\label{eq:bep}
 \mathbb{E}[ 2 x(i) - 1 ] = m_0, \ \ \mathbb{E}[ y(i) ] =m_1 \ \ \text{for} \ i \in \mathbb{Z}.
\end{align}

\begin{thm}[weak law of large numbers \cite{kac,go,mns}] \label{thm:wlln}
For any $T \in \mathbb{N}$ and any $\delta >0$,
\begin{align}
 \lim_{N \to \infty} \mu_m \left( \bigcup_{t=0}^T \bigcup_{i \in \{0, 1\}} \left\{ (x,y) \in \{ 0, 1\}^{\mathbb{Z}} \times \{ 0, 1\}^{\mathbb{Z}} : | (m_i^N \circ \varphi^t)(x,y) - \Phi^t_i(m)| > \delta \right\} \right) = 0.
\end{align}
\end{thm}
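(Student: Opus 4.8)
The plan is to reduce the statement to a single-event estimate and then invoke Chebyshev's inequality, using the explicit microscopic solution (\ref{eq:spin}). Since $\mu_m$ is subadditive and the union runs over the finite set $\{0,\dots,T\}\times\{0,1\}$, it suffices to prove that for each fixed $t\in\{0,\dots,T\}$ and $i\in\{0,1\}$,
\[
 \lim_{N\to\infty}\mu_m\left(\left\{(x,y): |(m_i^N\circ\varphi^t)(x,y)-\Phi_i^t(m)|>\delta\right\}\right)=0,
\]
because a finite sum of terms each tending to $0$ still tends to $0$.

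The case $i=1$ is immediate: the scatterer configuration is frozen by $\varphi$, so $m_1^N\circ\varphi^t=m_1^N$, while $\Phi_1^t(m)=m_1$; hence this is just the classical weak law of large numbers for the i.i.d.\ Bernoulli$(m_1)$ family $(y(k))_{k\in\mathbb{Z}}$, which follows from $\mathbb{E}[m_1^N]=m_1$, $\mathrm{Var}(m_1^N)=m_1(1-m_1)/(2N+1)$, and Chebyshev's inequality. For $i=0$, I would use (\ref{eq:spin}) to write $(m_0^N\circ\varphi^t)(x,y)=\frac{1}{2N+1}\sum_{k=-N}^{N}Z_k$ with
\[
 Z_k=\big[1-2y(k-1)\big]\cdots\big[1-2y(k-t)\big]\,(2x(k-t)-1)\in\{-1,1\}.
\]
Since $x$ and $y$ are independent and distinct coordinates of each are independent, $\mathbb{E}[Z_k]=(1-2m_1)^t m_0=\Phi_0^t(m)$ for every $k$, so $m_0^N\circ\varphi^t$ is an average of identically distributed, unit-bounded random variables with the correct mean. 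The decisive point is that $Z_k$ is a function of only the finitely many variables $y(k-1),\dots,y(k-t)$ and $x(k-t)$; consequently $Z_j$ and $Z_k$ are independent whenever $|j-k|\ge t$, so $\mathrm{Cov}(Z_j,Z_k)=0$ in that range, while $|\mathrm{Cov}(Z_j,Z_k)|\le 2$ always (as $|Z_k|\equiv 1$). Summing the at most $2t-1$ possibly nonzero covariances in each row gives $\mathrm{Var}(m_0^N\circ\varphi^t)\le 2(2t-1)/(2N+1)$ for $t\ge 1$ (the case $t=0$ being handled directly, the $Z_k=2x(k)-1$ then being i.i.d.), and Chebyshev's inequality closes this case. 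Collecting the finitely many $O\big(1/((2N+1)\delta^2)\big)$ bounds via the first step yields the theorem.

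All the steps are elementary, and I do not anticipate a genuine obstacle. The one place that deserves care is the variance estimate for $i=0$: one must pin down precisely which coordinates $Z_k$ depends on in order to justify $\mathrm{Cov}(Z_j,Z_k)=0$ for $|j-k|\ge t$ and thereby control $\mathrm{Var}(m_0^N\circ\varphi^t)$. No large-deviation or ergodic-theoretic machinery is needed for the \emph{weak} law; moreover, the same decomposition $m_0^N\circ\varphi^t=\frac{1}{2N+1}\sum_k Z_k$ should feed, together with the Borel--Cantelli lemma (Theorem \ref{thm:bc}), into the proof of the strong law.
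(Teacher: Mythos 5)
Your proposal is correct and follows essentially the same route as the paper: reduce to a single $(t,i)$ event by subadditivity, compute $\mathbb{E}[(m_i^N\circ\varphi^t)]=\Phi_i^t(m)$ from the explicit solution (\ref{eq:spin}), bound the variance by $O(t/N)$, and conclude with Chebyshev. The only (harmless) difference is in the variance step: the paper computes the second moment exactly via translation invariance and then estimates the resulting sum, whereas you exploit the finite-range dependence of the $Z_k$ to kill the covariances with $|j-k|\ge t$ directly; both yield the same $O\bigl(1/((2N+1)\delta^2)\bigr)$ bound.
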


\begin{proof}

Fix $T \in \mathbb{N}$ and $\delta >0$. By the subadditivity of measures, it is enough to show that for any $t \in \{0, \dots, T\}$ and any $i \in \{0, 1\}$,
\begin{align}
\label{eq:enough}
 \lim_{N \to \infty} \mu_m \left(  | (m_i^N \circ \varphi^t)(x,y) - \Phi^t_i(m)| > \delta  \right) = 0.
\end{align}
First, we show that
\begin{align}
 \mathbb{E}[ (m_i^N \circ \varphi^t)(x,y) ] = \Phi_i^t(m).
\end{align}
By using (\ref{eq:spin}), (\ref{eq:macro}), (\ref{eq:bep}) and statistical independence of $x(i)$ and $y(j)$, we have
\begin{align}
 \mathbb{E}[ (m_0^N \circ \varphi^t)(x,y) ] &= \frac{1}{2N+1} \sum_{i=-N}^{N} \mathbb{E}[(1-2y(i-1)) \dots (1-2y(i-t))(2x(i-t)-1)]
 \notag \\
 &= \frac{1}{2N+1} \sum_{i=-N}^{N} \mathbb{E}[1-2y(i-1)] \dots \mathbb{E}[1-2y(i-t)] \mathbb{E}[2x(i-t)-1]
 \notag \\
 &= (1-2m_1)^t m_0 = \Phi_0^t(m).
\end{align}
$\mathbb{E}[(m_1^N \circ \varphi^t)(x,y)] = m_1 = \Phi^t_1(m)$ is obvious.
Next, we evaluate the second moments of $(m_i^N \circ \varphi^t)(x,y)$.
\begin{align}
 \mathbb{E}\left[\left((m_0^N \circ \varphi^t)(x,y)\right)^2\right] &= \frac{1}{(2N+1)^2} \sum_{i,j=-N}^N \mathbb{E}[ (1-2y(i-1)) \dots (1-2y(i-t)) 
 \notag \\
 & \ \ \ \  \times  (1-2y(j-1)) \dots (1-2y(j-t)) ] \cdot \mathbb{E}[ (2x(i-t)-1)(2x(j-t)-1)]
 \notag \\
 &= \frac{1}{(2N+1)^2} \sum_{k=-2N}^{2N} (2N+1 -|k|)  \ \mathbb{E}[ (1-2y(0)) \dots (1-2y(t-1)) 
 \notag \\
 & \ \ \ \  \times  (1-2y(k)) \dots (1-2y(k+t-1)) ] \cdot \mathbb{E}[ (2x(0)-1)(2x(k)-1)]
 \notag \\
 &= \frac{1}{(2N+1)^2} \left[ (2N+1) + 2 m_0^2 \ \sum_{k=1}^{2N} (2N+1 -k) (1-2m_1)^{2 \min \{t,k \}} \right].
\end{align}
We have used the translation invariance of $\mu_{m}$ and statistical independence of random variables at different sites. We take $N$ such that $T \leq 2N$. Then, we have
\begin{align}
 \sum_{k=1}^{2N} (2N+1 -k) (1-2m_1)^{2 \min \{t,k \}} &= \sum_{k=1}^{t} (2N+1 -k) (1-2m_1)^{2k}  + \sum_{k=t+1}^{2N} (2N+1 -k) (1-2m_1)^{2t}
 \notag \\
 &\leq (2N+1)t + \frac{1}{2} (1-2m_1)^{2t} (2N+1)^2.
\end{align}
Therefore,
\begin{align}
 \mathrm{Var}[(m_0^N \circ \varphi^t)(x,y)] \leq \frac{1 + 2 m_0^2 t}{2N+1}
\end{align}
for $t \in \{0, \dots, T \}$ and $T \leq 2N$. Additionally, we obtain the variance of $m_1^N$,
\begin{align}
 \mathrm{Var}[(m_1^N \circ \varphi^t)(x,y)] = \frac{1-m_1^2}{2N+1}.
\end{align}
By using Chebyshev's inequality, we have
\begin{align}
\label{eq:cinq}
 \mu_{m} \left( | (m_i^N \circ \varphi^t)(x,y) - \Phi^t_i(m)| > \delta  \right) \leq \frac{\mathrm{Var}[(m_i^N \circ \varphi^t)(x,y)]}{\delta^2} \leq \frac{C}{\delta^2(2N+1)}
\end{align}
with a constant $C$ independent of $N$, which implies (\ref{eq:enough}). \qed
\end{proof}

We have the strong form of the law of large numbers from the inequality (\ref{eq:cinq}) and the first Borel-Cantelli lemma (Theorem \ref{thm:bc}).

\begin{thm}[strong law of large numbers]
\label{thm:slln}
For any natural number $T \in \mathbb{N}$,
\begin{align}
 \mu_m \left( \lim_{N \to \infty} (m_i^N \circ \varphi^t)(x,y) = \Phi_i^t(m) \ \text{for all $i \in \{0, 1\}$ and $t \in \{0, \dots, T\}$} \right) = 1.
\end{align}
\end{thm}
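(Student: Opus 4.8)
The plan is to upgrade the weak law to an almost-sure statement by combining the second-moment bound (\ref{eq:cinq}) with the first Borel--Cantelli lemma (Theorem \ref{thm:bc}), the only twist being that one must pass to a sparse subsequence of radii $N$ before the lemma becomes applicable, and then recover the full sequence by a \emph{deterministic} interpolation argument. At the end the finitely many full-measure events so produced are intersected.

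First I would fix $T\in\mathbb{N}$, a pair $i\in\{0,1\}$, $t\in\{0,\dots,T\}$, and a rational $\delta>0$. By (\ref{eq:cinq}) we have $\mu_m\bigl(|(m_i^N\circ\varphi^t)(x,y)-\Phi_i^t(m)|>\delta\bigr)\le C/(\delta^2(2N+1))$; since $\sum_N 1/(2N+1)=\infty$, Theorem \ref{thm:bc} cannot be applied to the full sequence. The remedy is to restrict to the subsequence of radii $N_k=k^2$, for which $\sum_k \mu_m\bigl(|(m_i^{k^2}\circ\varphi^t)(x,y)-\Phi_i^t(m)|>\delta\bigr)\le\sum_k C/(\delta^2(2k^2+1))<\infty$. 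Then Theorem \ref{thm:bc} shows that the set of $(x,y)$ for which $|(m_i^{k^2}\circ\varphi^t)(x,y)-\Phi_i^t(m)|>\delta$ holds for infinitely many $k$ is $\mu_m$-null. Intersecting the complements over all rational $\delta>0$ (a countable intersection of full-measure sets, so still full measure by Proposition \ref{prop:null}) gives $(m_i^{k^2}\circ\varphi^t)(x,y)\to\Phi_i^t(m)$ as $k\to\infty$ for $\mu_m$-almost every $(x,y)$.

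Next I would fill in the gaps between consecutive squares without any further probabilistic input. Writing $(m_i^N\circ\varphi^t)(x,y)=S_N/(2N+1)$, where by (\ref{eq:spin}) and (\ref{eq:macro}) $S_N$ is a sum of $2N+1$ summands each of absolute value at most $1$, for $k^2\le N<(k+1)^2$ one has $|S_N-S_{k^2}|\le 2\bigl((k+1)^2-k^2\bigr)=2(2k+1)$ and $|S_{k^2}|\le 2k^2+1$, hence
\begin{align}
\bigl|(m_i^N\circ\varphi^t)(x,y)-(m_i^{k^2}\circ\varphi^t)(x,y)\bigr|
&\le \frac{2(2k+1)}{2N+1}+(2k^2+1)\left|\frac{1}{2N+1}-\frac{1}{2k^2+1}\right| \notag \\
&\le \frac{4(2k+1)}{2k^2+1},
\end{align}
which tends to $0$ as $k\to\infty$, uniformly in $N\in[k^2,(k+1)^2)$ and, crucially, for \emph{every} $(x,y)$. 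Combining this with the subsequential convergence of the previous step yields $(m_i^N\circ\varphi^t)(x,y)\to\Phi_i^t(m)$ along the full sequence $N\to\infty$ for $\mu_m$-almost every $(x,y)$.

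Finally, since $\{0,1\}\times\{0,\dots,T\}$ is finite, I would intersect the $2(T+1)$ full-measure events just obtained to get a single full-measure set on which $\lim_{N\to\infty}(m_i^N\circ\varphi^t)(x,y)=\Phi_i^t(m)$ for all $i\in\{0,1\}$ and all $t\in\{0,\dots,T\}$ simultaneously, which is exactly the assertion. The only genuine obstacle is the non-summability of $\sum_N 1/(2N+1)$; once it is circumvented by the square-subsequence trick, the interpolation step is painless precisely because $m_i^N\circ\varphi^t$ is a normalized sum of uniformly bounded quantities, so no maximal inequality is needed.
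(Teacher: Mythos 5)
Your proposal is correct and follows essentially the same route as the paper: the Chebyshev bound (\ref{eq:cinq}), Borel--Cantelli along the subsequence $N=k^2$, a countable intersection over tolerances, and a deterministic interpolation between consecutive squares exploiting that $m_i^N\circ\varphi^t$ is a normalized sum of terms bounded by $1$. The only cosmetic difference is that you compare $m_i^N$ directly to $m_i^{k^2}$ by a triangle inequality, whereas the paper sandwiches it between the values at $k^2$ and $(k+1)^2$ via inequality (\ref{eq:ineq}); both yield the same uniform $o(1)$ error.
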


\begin{proof}
For $k \in \mathbb{N}_{>0}$, we set
\begin{align}
 C_{N,k} = \bigcup_{t=0}^T \bigcup_{i \in \{0, 1\}} & \left\{ (x,y) \in \{ 0, 1\}^{\mathbb{Z}} \times \{ 0, 1\}^{\mathbb{Z}} :  |(m_i^{N} \circ \varphi^t)(x,y) - \Phi^t_i(m)| > \frac{1}{k} \right\}.
\end{align}
From (\ref{eq:cinq}), we have
\begin{align}
 \mu_{m} (C_{N^2,k}) \leq \frac{2(T+1)Ck^2}{2N^2+1}
\end{align}
for $N$ satisfying $T \leq 2N$. Therefore,
\begin{align}
\label{eq:bct}
 \sum_{N=0}^{\infty} \mu_{m} (C_{N^2,k}) \leq \sum_{N=0}^{\lceil T/2 \rceil} \mu_{m} (C_{N^2,k}) + \sum_{N= \lceil T/2 \rceil}^{\infty} \frac{2(T+1)Ck^2}{(2N^2+1)} < \infty.
\end{align}
By the first Borel-Cantelli lemma (Theorem \ref{thm:bc}) and Proposition \ref{prop:null}, we have
\begin{align}
 \mu_m \left( \bigcup_{k \in \mathbb{N}_{>0}} \bigcap_{N \in \mathbb{N}} \bigcup_{M \geq N} C_{M^2, k} \right) = 0.
\end{align}
This means that for $\mu_m$-almost all configuration $(x,y)$, the subsequence $ ((m_i^{N^2} \circ \varphi^t)(x,y))_{N \in \mathbb{N}}$ converges to $\Phi^t_i(m)$ for $t \in \{0, \dots, T \}$ and $i \in \{0,1\}$

Next, we show the convergence of the whole sequence. For any natural numbers $L,M$ and a real number $p$ with $p \leq (2L+1)/(2M+1) \leq 1$, we have
\begin{align}
\label{eq:ineq}
 (m_0^{L} \circ \varphi^t)(x,y) - (m_0^{M} \circ \varphi^t)(x,y) &= \left( 1 - \frac{2L+1}{2M+1} \right) (m_0^{L} \circ \varphi^t)(x,y)
 \notag \\
 & \ \  - \frac{1}{2M+1} \left( \sum_{i=-M}^{-L-1} (2x(i)-1) + \sum_{i=L+1}^{M} (2x(i)-1) \right)
 \notag \\
 &\leq 1-p + \frac{2(M-L)}{2M+1} \leq 2(1-p),
\end{align}
where we have used $m_0^L \in [-1,1]$ and $2x(i)-1 \in \{ -1, 1\}$. We consider a natural number $K$ such that $N^2 \leq K \leq (N+1)^2$. If we take $L=N^2$, $M=K$ and $p=p_N=(2N^2+1)/(2(N+1)^2+1)$ first and take $L=K$, $M=(N+1)^2$ and $p=p_N$ second, the inequality (\ref{eq:ineq}) gives
\begin{align}
 (m_0^{N^2} \circ \varphi^t)(x,y) - 2( 1 - p_N ) \leq (m_0^{K} \circ \varphi^t)(x,y) \leq (m_0^{(N+1)^2} \circ \varphi^t)(x,y) + 2( 1 - p_N ).
\end{align}
These inequalities also hold for $(m_1^{K} \circ \varphi^t)(x,y)$. Since $p_N \to 1$ as $N \to \infty$, for $\mu_m$-almost all $(x,y)$, the whole sequence $((m_i^{N} \circ \varphi^t)(x,y))_{N \in \mathbb{N}}$ converges to $\Phi_i^t(m)$ for $i \in \{ 0, 1\}$ and $t \in \{ 0, \dots, T\}$. \qed
\end{proof}

\begin{rem}
A similar analysis leads to the weak and strong law of large numbers for the microscopic dynamics $\varphi^{-1}$ and positive integers $T>0$. This is a consequence of the microscopic reversibility and the statistical property of the initial measure. If an initial configuration has no correlation between sites, whether the microscopic dynamics is $\varphi$ or $\varphi^{-1}$, which corresponds to the direction of movement of the spins, is irrelevant to the validity of the macroscopic relaxation.
\end{rem}

\subsection{Algorithmic randomness approach}

We reformulate the law of large numbers associated with the macroscopic law as properties of individual microscopic states. The concept of algorithmic randomness introduced in section \ref{sec:2} helps us to do that. By using the randomness notion, we have the following theorem.

\begin{thm}[effective strong law of large numbers]
\label{thm:elln}
Let $m=(m_0, m_1)$ be computable reals and $T \in \mathbb{N}$. If $(x,y) \in \{ 0, 1 \}^{\mathbb{Z}} \times \{ 0, 1 \}^{\mathbb{Z}}$ is Martin-L{\"o}f random with respect to $\mu_{m}$,
\begin{align}
 \lim_{N \to \infty} (m_i^N \circ \varphi^t)(x,y) = \Phi_i^t(m)
\end{align}
for all $i \in \{0, 1\}$ and $t \in \{0, \dots, T\}$.
\end{thm}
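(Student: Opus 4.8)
The plan is to exhibit the set of ``bad'' microstates --- those violating the claimed convergence --- as a Martin-L\"of $\mu_m$-null set, and then to note that no Martin-L\"of random point belongs to such a set. Concretely I would use the equivalence between Martin-L\"of randomness and Solovay randomness (the Proposition following Definition~\ref{defin:solovay}), building an explicit Solovay test out of the Chebyshev estimate (\ref{eq:cinq}) already obtained in the proof of Theorem~\ref{thm:wlln}. The hypothesis that $m_0,m_1$ are computable reals enters twice: it makes $\mu_m = \mu_{(1+m_0)/2}\times\mu_{m_1}$ a computable measure, so that ``$\mu_m$-random'' is meaningful, and --- since $\Phi$ is a fixed polynomial map --- it makes $\Phi_i^t(m)$ a computable real for every $t,i$.

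First I would fix $t\in\{0,\dots,T\}$ and $i\in\{0,1\}$ and, for $k,L\in\mathbb{N}_{>0}$, set
\[
 S^{(k,t,i)}_{L} = \left\{ (x,y) : \bigl|(m_i^{L^2}\circ\varphi^t)(x,y) - \Phi_i^t(m)\bigr| > \frac{1}{k} \right\},
\]
viewed as a subset of Cantor space via the computable codings $\iota,\kappa$. By (\ref{eq:spin}) and (\ref{eq:macro}), $(m_i^{L^2}\circ\varphi^t)(x,y)$ depends only on the finitely many bits $x(j),y(j)$ with $|j|\le L^2+T$, and on each cylinder fixing those bits it equals a rational number computed uniformly from the bits and from $L,t$; combined with the fact that $\{v\in\mathbb{Q}:|v-\Phi_i^t(m)|>1/k\}$ is c.e.\ uniformly in $k$ (computability of $\Phi_i^t(m)$), this shows that $(S^{(k,t,i)}_L)_{L}$ is c.e.\ open uniformly in $L$ --- indeed uniformly clopen. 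I expect this effectiveness bookkeeping (unwinding the codings and verifying the ``local computability'' of $m_i^{L^2}\circ\varphi^t$) to be the only genuinely delicate point; everything afterwards is a reprise of the proof of Theorem~\ref{thm:slln}.

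Next I would check that $(S^{(k,t,i)}_L)_{L}$ is a Solovay test for each fixed $k,t,i$. By (\ref{eq:cinq}), $\mu_m(S^{(k,t,i)}_L) \le Ck^2/(2L^2+1)$ whenever $T\le 2L^2$, with $C$ independent of $L,t,i$; bounding the finitely many remaining terms by $1$, $\sum_{L}\mu_m(S^{(k,t,i)}_L)<\infty$, exactly as in (\ref{eq:bct}). Hence a Martin-L\"of $\mu_m$-random $(x,y)$ is Solovay $\mu_m$-random, so it lies in only finitely many $S^{(k,t,i)}_L$; that is, $\limsup_{L\to\infty}\bigl|(m_i^{L^2}\circ\varphi^t)(x,y)-\Phi_i^t(m)\bigr|\le 1/k$. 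Letting $k\to\infty$, and ranging over the finitely many pairs $(t,i)$, gives $\lim_{L\to\infty}(m_i^{L^2}\circ\varphi^t)(x,y)=\Phi_i^t(m)$ for all $t\in\{0,\dots,T\}$ and $i\in\{0,1\}$.

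Finally I would upgrade convergence along perfect squares to convergence of the full sequence; this step is purely combinatorial and uses no randomness. Specifically, the estimate (\ref{eq:ineq}) from the proof of Theorem~\ref{thm:slln} gives, for every $(x,y)$ and every $N^2\le K\le (N+1)^2$,
\[
 (m_i^{N^2}\circ\varphi^t)(x,y) - 2(1-p_N) \le (m_i^{K}\circ\varphi^t)(x,y) \le (m_i^{(N+1)^2}\circ\varphi^t)(x,y) + 2(1-p_N),
\]
with $p_N=(2N^2+1)/(2(N+1)^2+1)\to 1$; since both bracketing subsequences converge to $\Phi_i^t(m)$, so does $\bigl((m_i^{K}\circ\varphi^t)(x,y)\bigr)_K$, which is the desired conclusion.
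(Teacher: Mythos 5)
Your proposal is correct and follows essentially the same route as the paper's own proof: build a Solovay test from the Chebyshev estimate (\ref{eq:cinq}) along the subsequence of perfect squares, invoke the equivalence of Martin-L\"of and Solovay randomness to get convergence along that subsequence, and then use the deterministic bracketing inequality (\ref{eq:ineq}) to pass to the full sequence. The only differences are cosmetic --- you run one test per pair $(t,i)$ where the paper unions over the finitely many pairs, and you spell out the effectiveness bookkeeping that the paper compresses into one sentence.
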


\begin{proof}
Fix $T \in \mathbb{N}$. Since $\Phi_i^t(m)$ are computable reals, $(C_{N^2,k})_{N \in \mathbb{N}}$ is c.e. open uniformly in $N$. By (\ref{eq:bct}), it is a Solovay test. Therefore, if $(x,y) \in \{ 0, 1 \}^{\mathbb{Z}} \times \{ 0, 1 \}^{\mathbb{Z}}$ is ML random with respect to $\mu_{m}$, then $(x,y) \in \bigcup_{N \in \mathbb{N}} \bigcap_{M \geq N} (C_{N^2,k})^c$. Because $k$ is arbitrary, the subsequence $((m_i^N \circ \varphi^t)(x,y))_{N \in \mathbb{N}}$ converges to $\Phi_i^t(m)$ for any $i \in \{0, 1\}$ and $t \in \{ 0, \dots, T \}$. The proof of the convergence of the whole sequence is the same as the proof of Theorem \ref{thm:slln}. \qed
\end{proof}

According to Theorem \ref{thm:elln}, the algorithmic randomness of a microscopic state is a sufficient condition that the microstate obeys the macroscopic relaxation law. Since the set of all ML random microstates has measure one (see Theorem \ref{thm:full}), this sufficient condition is not too strong from a viewpoint of measure-theoretic typicality. In particular, the strong law of large numbers (Theorem \ref{thm:slln}) follows from the effective law. We stress that the effective law of large numbers holds for a wide class of models. We discuss the generality of our result in \ref{subsec:rt}.

\begin{rem} \label{rem:vl}
Van Lambalgen's theorem \cite{vl} implies that $(x,y)$ is ML $\mu_m$-random if and only if $y$ is ML $\mu_{m_1}$-random and $x$ is ML $\mu_{(1+m_0)/2}$-random with \textit{oracle} $y$. Therefore, Theorem \ref{thm:elln} insists that for a given ML $\mu_{m_1}$-random configuration of quenched scatterers $y$, $\mu_{(1+m_0)/2}$-random microstates with oracle $y$ satisfy the macroscopic law. We note that for a $\mu_{(1+m_0)/2}$-random element $x$ and $\mu_{m_1}$-random element $y$, the pair $(x,y)$ does not necessarily obey the macroscopic law. For instance, if $(1+m_0)/2 = m_1$ and $x$ is $\mu_{(1+m_0)/2}$-random, $(x,x)$ violates the law.
\end{rem}

\section{Entropy and the zeroth law of thermodynamics}
\label{sec:4}

Entropy is a fundamental concept in various fields such as thermodynamics, statistical physics, information theory and dynamical systems theory. Each type of entropy has a different role. We investigate the Boltzmann entropy quantifying irreversibility on transitions between macroscopic states. 

\subsection{Shannon entropy}

Before considering the Boltzmann entropy, we review basic properties of the Shannon entropy for convenience, which is an information-theoretic quantity characterizing the optimal compression rate in the information source coding problem \cite{ct}. 

\begin{rem}
In this subsection, the configurations $(x,y)$, the probability measures and the microscopic dynamics $\varphi$ are regarded as ones defined on $\{ 0, 1\}^{\mathbb{N}} \times \{ 0, 1\}^{\mathbb{N}}$, not on $\{ 0, 1\}^{\mathbb{Z}} \times \{ 0, 1\}^{\mathbb{Z}}$, by the encoding function $\iota$ in section \ref{subsec:mlr}.
\end{rem}

\begin{defin}[Shannon entropy rate, self-entropy rate]
The \textit{Shannon entropy rate} of the joint probability measure $\mu$ on $\{ 0, 1 \}^{\mathbb{N}} \times \{ 0, 1 \}^{\mathbb{N}}$ is defined as
\begin{align}
 \bar{H}(\mu) = \limsup_{n \to \infty} - \frac{1}{n} \sum_{(\sigma, \tau) \in 2^n \times 2^n} \mu ([\sigma] \times [ \tau ]) \ln \mu([\sigma] \times [ \tau ]).
\end{align}
The self-entropy rate of $(x,y) \in \{ 0, 1 \}^{\mathbb{N}} \times \{ 0, 1 \}^{\mathbb{N}}$ with respect to $\mu$ is defined as
\begin{align}
 \bar{H}_{\mu}(x,y) = \limsup_{n \to \infty} - \frac{1}{n} \ln \mu( [ x \upharpoonright n] \times [ y \upharpoonright n]).
\end{align}
\end{defin}

A straightforward calculation provides 
\begin{align}
\bar{H}(\mu_{m}) =  h \left( \frac{1+m_0}{2} \right) + h(m_1),
\end{align}
where $h(p) = - p \ln p - (1-p) \ln (1-p)$ ($p \in [0,1]$) is the binary entropy function. For $x \in \{ 0, 1 \}^{\mathbb{N}}$ and $n \in \mathbb{N}$, set $N(x,n)= | \{ i : x(i) = 1, 0 \leq i \leq n-1 \} |$. If $(x,y) \in \{ 0, 1 \}^{\mathbb{N}} \times \{ 0, 1 \}^{\mathbb{N}}$ satisfies
\begin{align}
 \lim_{n \to \infty} \frac{N(x,n)}{n} = p_x, \ \ \lim_{n \to \infty} \frac{N(y,n)}{n} = p_y,
\end{align}
the self-entropy rate of $(x,y)$ with respect to $\mu_{m}$ is given by
\begin{align}
 \bar{H}_{\mu_{m}}(x,y) =&  - p_x \ln \left( \frac{1+m_0}{2} \right) - (1-p_x) \ln \left( \frac{1- m_0}{2} \right)  - p_y \ln m_1 - (1-p_y) \ln (1-m_1).
\end{align}
In particular, for any random element $(x,y) \in \{ 0, 1 \}^{\mathbb{N}} \times \{ 0, 1 \}^{\mathbb{N}}$ with respect to $\mu_{m}$, 
\begin{align}
 \bar{H}_{\mu_{m}}(x,y)  = \bar{H}(\mu_{m}),
\end{align}
because $p_x = (1 + m_0)/2$ and $p_y = m_1$. This type of statement is referred to as the effective version of the \textit{asymptotic equipartition property}.

For deterministic and reversible dynamical systems, the Shannon entropy of the probability measure describing the system does not provide useful information on irreversibility. If we define the probability measure at time $t$ starting from the initial measure $\mu_m$ as $\mu_{m,t} = \mu_m \varphi^{-t}$, the Shannon entropy rate is invariant under the time evolution, that is, $\bar{H}(\mu_{m}) = \bar{H}(\mu_{m,t})$. This invariance remains true for random elements. In fact, since the initial segment of the first and second components of $\varphi^t(x,y) \in \{ 0, 1 \}^{\mathbb{N}} \times \{ 0, 1 \}^{\mathbb{N}}$, $x_t \upharpoonright n$ and $y_t \upharpoonright n$, depend only on $x \upharpoonright n+2t$ and $y \upharpoonright n+2t$ (the factor 2 comes from the way of encoding $\iota$ from $\{ 0, 1\}^{\mathbb{Z}}$ to $\{ 0, 1 \}^{\mathbb{N}}$), the inclusion relation
\begin{align}
\label{eq:inc}
 [ x \upharpoonright n+2t] \times [ y \upharpoonright n+2t] \subseteq \varphi^{-t} ([ x_t \upharpoonright n] \times [ y_t \upharpoonright n] ) \subseteq [ x \upharpoonright n-2t] \times [ y \upharpoonright n-2t]
\end{align}
holds. Then,
\begin{align}
 - \frac{1}{n} \ln \mu_{m}([ x \upharpoonright n+2t] \times [ y \upharpoonright n+2t]) &\leq - \frac{1}{n} \ln \mu_{m,t} ([ x_t \upharpoonright n] \times [ y_t \upharpoonright n] )
 \notag \\
 & \leq - \frac{1}{n} \ln \mu_{m}([ x \upharpoonright n-2t] \times [ y \upharpoonright n-2t]).
\end{align}
For any random element $(x,y) \in \{ 0, 1 \}^{\mathbb{N}} \times \{ 0, 1 \}^{\mathbb{N}}$, the terms on the left- and right-hand sides converge to $\bar{H}(\mu_{m})$. Therefore, the self-entropy rate of $\varphi^t(x,y)$ with respect to $\mu_{m,t}$ exists and equals that of $(x,y)$ with respect to $\mu_{m}$:
\begin{align}
 \bar{H}_{\mu_{m,t}}(\varphi^t(x,y)) = \lim_{n \to \infty} - \frac{1}{n} \ln \mu_{m,t} ([ x_t \upharpoonright n] \times [ y_t \upharpoonright n]) = \bar{H}(\mu_{m}) = \bar{H}(\mu_{m,t}).
\end{align}

\subsection{Boltzmann entropy and the zeroth law of thermodynamics}

Because the Shannon entropy does not change in time in reversible dynamical systems, we need another quantity to characterize the macroscopic irreversibility. According to Boltzmann's idea, the asymmetry of the direction of time in the macroscopic behavior emerges from the large differences between the number of microstates consistent with macrostates. Since the number of microstates corresponding to a macrostate is proportional to the probability of the macrostate under the uniform measure $\lambda$ and different macrostates usually have exponentially different probabilities, it is reasonable to introduce the rate function in the large deviation theory as the Boltzmann entropy.

\begin{defin}[Boltzmann entropy]
The \textit{Boltzmann entropy} of a macroscopic state $m=(m_0, m_1)$ is defined as
\begin{align}
 S_{B}(m) = \lim_{\delta \downarrow 0} \lim_{N \uparrow \infty} \frac{1}{2N+1} \ln \lambda \times \lambda \left( \{ (x,y) \in \{ 0, 1 \}^{\mathbb{Z}} \times \{ 0, 1 \}^{\mathbb{Z}} : |m_0^N(x) - m_0| \leq \delta,  |m_1^N(x) - m_1| \leq \delta \} \right),
\end{align}
where $\lambda$ is the uniform measure on $\{ 0, 1 \}^{\mathbb{Z}}$.
\end{defin}

The following scenario is well-known \cite{lebowitz}: An initial microstate in a nonequilibrium macrostate with low Boltzmann entropy evolves \textit{typically} toward macrostates with higher entropy and finally reaches the equilibrium state with the maximum entropy. 

Although at first sight, it explains the macroscopic irreversibility qualitatively, it should be noted that we must suppose an initial probability measure in order to argue the \textit{typical} macroscopic behavior. The above scenario is certainly true if we assume that initial microstates are chosen according to the microcanonical measure or the Gibbs measure. In fact, by Stirling's formula, we have
\begin{align}
 S_B(m) = - 2 \ln 2 + \bar{H}(\mu_{m}).
\end{align}
If we prepare initial microstates according to the Gibbs measure $\mu_m$, then the initial macrostate is $m$ and the macrostate evolves according to the law $\Phi$ with probability one according to Theorem \ref{thm:slln}. Then, the Boltzmann entropy difference is typically positive: 
\begin{align}
S_B(\Phi^t(m)) - S_B(m) = \bar{H}(\mu_{\Phi^t(m)}) - \bar{H}(\mu_m) > 0 \ \text{for} \ t > 0.
\end{align}
We can reformulate the argument from a viewpoint of randomness. If an initial state $(x,y) \in \{ 0, 1 \}^{\mathbb{Z}} \times \{ 0, 1 \}^{\mathbb{Z}}$ is ML $\mu_{m}$-random, the macrostate at the initial time is $m$ and at time $t$ is $\Phi^t(m)$ (see Theorem \ref{thm:elln}). Then, the Boltzmann entropy increases over time. This is the zeroth law of thermodynamics for algorithmic random microstates.

\section{Microscopic reversibility and anti-Boltzmann behavior}
\label{sec:5}

\subsection{Microscopic reversibility}

The microscopic dynamics $\varphi : \{ 0, 1 \}^{\mathbb{Z}} \times \{ 0, 1 \}^{\mathbb{Z}} \to \{ 0, 1 \}^{\mathbb{Z}} \times \{ 0, 1 \}^{\mathbb{Z}}$ is invertible. This property is referred to as \textit{microscopic reversibility}. Let us define a time-reversal transformation $\pi : \{ 0, 1 \}^{\mathbb{Z}} \times \{ 0, 1 \}^{\mathbb{Z}} \to \{ 0, 1 \}^{\mathbb{Z}} \times \{ 0, 1 \}^{\mathbb{Z}}$ by
\begin{align}
 (\pi(x,y))(i) = (x(-i), y(-i-1)).
\end{align}
The time-reversal transformation is an involution $\pi^2 = 1$ and is totally computable. The microscopic reversibility is represented by $\pi \circ \varphi = \varphi^{-1} \circ \pi$. We note that $m_0^N(\pi(x,y)) = m_0^N(x,y)$ and $m_1^N(\pi(x,y))=m_1^N(x,\Sigma^{-1}(y))$. Here, $\Sigma(x)(i) = x(i+1)$ is the shift map on $\{ 0, 1\}^{\mathbb{Z}}$. In particular, the time-reversal transformation does not affect the macrostate.

\subsection{Irreversible information loss}

If there is a microscopic trajectory $( \varphi^t(x,y) )_{t \in \{0, 1, \dots, T\}}$ whose macroscopic trajectory is $( \Phi^t(m) )_{t \in \{0, 1, \dots, T\}}$, then the time-reversed one $((\varphi^t \circ \pi \circ \varphi^T)(x,y))_{t \in \{0, 1, \dots, T\}} = ( (\pi \circ \varphi^{T-t})(x,y))_{t \in \{0, 1, \dots, T\}}$ is macroscopically observed as $(\Phi^{T-t}(m))_{t \in \{0, 1, \dots, T\}}$. Loschmidt inquired how the above consequence of the microscopic reversibility is consistent with the macroscopic irreversibility. This question is called the reversibility paradox problem.

Sasa and Komatsu introduced the irreversible information loss quantifying the asymmetry between the trajectory $( \varphi^t(x,y) )_{t \in \{0, 1, \dots, T\}}$ and the time-reversed one $( (\pi \circ \varphi^{T-t})(x,y))_{t \in \{0, 1, \dots, T\}}$, and investigated the relation to the Boltzmann entropy change \cite{sk}. Following this idea, we define the rate of \textit{irreversible information loss} as
\begin{align}
 I_{\mu_{m}, t}(x,y) = \limsup_{n \to \infty} - \frac{1}{n} \ln \frac{\mu_{m,t} ( \pi ([ x_t \upharpoonright n] \times [y_t \upharpoonright n ]) )}{ \mu_{m,t} ( [ x_t \upharpoonright n  ] \times [ y_t \upharpoonright n] ) },
\end{align}
where the dynamical system is regarded as one on $\{0,1\}^{\mathbb{N}} \times \{ 0, 1\}^{\mathbb{N}}$. The positivity of the irreversible information loss of a microstate $(x,y)$ implies the exponential difference between the probabilities at time $t$ of the microstate $\varphi^t(x,y)$ and the time-reversed one $(\pi \circ \varphi^t)(x,y)$. Then, it explains how difficult it is to prepare the time-reversed state $(\pi \circ \varphi^t)(x,y)$ relative to the state $\varphi^t(x,y)$ in the measure-theoretic sense. We note that this argument is different from the standard one on the reversibility paradox indicating the practical impossibility of the time-reverse transformation.

We can easily calculate the above quantity for random states $(x,y)$ as follows. The microscopic reversibility implies that $\varphi^{-t} \circ \pi \circ \varphi^t = \pi \circ \varphi^{2t}$. If $(x,y)$ is ML $\mu_{m}$-random, by Theorem \ref{thm:elln}, we have
\begin{align}
 \lim_{n \to \infty} \frac{N(x_{2t},n)}{n} = \frac{1 + \Phi_0^{2t}(m)}{2}, \ \ \lim_{n \to \infty} \frac{N(y_{2t},n)}{n} = \Phi_1^{2t}(m).
\end{align}
The same type of inclusion relation as (\ref{eq:inc}) implies
\begin{align}
 \lim_{n \to \infty} - \frac{1}{n} \ln \mu_{m,t} ( \pi ([ x_t \upharpoonright n] \times [y_t \upharpoonright n ]) ) =&  - \frac{1 + \Phi_0^{2t}(m)}{2} \ln \left( \frac{1+m_0}{2} \right) - \frac{1 - \Phi_0^{2t}(m)}{2} \ln \left( \frac{1- m_0}{2} \right) 
 \notag \\
  & \ \  - \Phi_1^{2t}(m) \ln m_1 - (1-\Phi_1^{2t}(m)) \ln (1-m_1) .
\end{align}
Therefore, the rate of the irreversible information loss of a random element $(x,y)$ is given by
\begin{align}
 I_{\mu_{m},t}(x,y) = \frac{m_0 - \Phi_0^{2t}(m)}{4} \ln \frac{1+m_0}{1-m_0} = \frac{1 - (1-2m_1)^{2t}}{4} m_0\ln \frac{1+m_0}{1-m_0}.
\end{align}
If $m_0 \neq 0$, $m_1 \neq 1/2$, and $t >0$, then $I_{\mu_{m},t}(x,y) > 0$. This result implies the measure-theoretic difficulty of preparing the time-reversed state relative to the random state. Moreover, by explicit calculation, we have
\begin{align}
 I_{\mu_{m},t}(x,y) - (S_B(\Phi^t(m)) - S_B(m)) \geq 0.
\end{align}
The above equality holds for $m_0=0$, $m_1 \in \{ 0, 1/2 \}$, or $t=0$. That is, the degree of difficulty is greater than the Boltzmann entropy change in this case.

\subsection{Violation of the macroscopic law and nonrandomness of time-reversed states}
\label{subsec:53}

Suppose that an initial probability measure is $\mu_m$ and $(x,y) \in \{ 0, 1\}^{\mathbb{Z}} \times \{ 0, 1\}^{\mathbb{Z}}$ is ML $\mu_{m}$-random. By the conservation of ML randomness (see Theorem \ref{thm:cons}), $\varphi^t(x,y)$ is ML random with respect to the probability measure at time $t$, $\mu_{m,t}$. That is, the randomness of the initial microstate is preserved under the dynamics. An intriguing question is whether the time-reversed state $(\pi \circ \varphi^t)(x,y)$ is ML $\mu_{m,t}$-random or not. The microscopic reversibility implies that the macroscopic evolution starting from the state $(\pi \circ \varphi^t)(x,y)$ does not obey the macroscopic law $\Phi$. Therefore, $(\pi \circ \varphi^t)(x,y)$ is not $\mu_{m,t}$-random.

\begin{thm}[non-randomness of time-reversed state]
\label{thm:nrtr}
Let $m=(m_0, m_1)$ be computable reals and $m_0 \neq 0$, $m_1 \not\in \{ 0, 1/2 \}$. For any $\mu_{m}$-random element $(x,y) \in \{ 0, 1 \}^{\mathbb{Z}} \times \{0, 1\}^{\mathbb{Z}}$ and $t \in \mathbb{N} \backslash \{ 0 \}$, $(\pi \circ \varphi^t)(x,y)$ is not ML $\mu_{m,t}$-random.
\end{thm}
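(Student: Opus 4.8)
The plan is to argue by contradiction: assume $z := (\pi\circ\varphi^t)(x,y)$ is ML $\mu_{m,t}$-random, and derive from this a statement about the empirical magnetization of $(x,y)$ that contradicts Theorem \ref{thm:elln}. The device is to transport the assumed randomness of $z$ backward along the dynamics. First I would note that $\varphi^{-1}$ is given by an explicit local formula, so $\varphi^{-t}$ is a total computable function on $\{0,1\}^{\mathbb{N}}$, and that $\mu_{m,t}$ is computable (by the Proposition preceding Theorem \ref{thm:cons}, applied to $\mu_m$ and $\varphi^t$). Since $\varphi$ is a bijection, the image measure of $\mu_{m,t} = \mu_m\varphi^{-t}$ under the map $\varphi^{-t}$ is $\mu_m$ again; hence, by conservation of ML randomness (Theorem \ref{thm:cons}), the state $w := \varphi^{-t}(z)$ would be ML $\mu_m$-random.

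Next I would identify $w$ explicitly by means of microscopic reversibility $\pi\circ\varphi = \varphi^{-1}\circ\pi$. Iterating gives $\pi\circ\varphi^t = \varphi^{-t}\circ\pi$, equivalently $\varphi^{-t}\circ\pi = \pi\circ\varphi^t$, hence
\begin{align}
 w = \varphi^{-t}\circ\pi\circ\varphi^t(x,y) = \pi\circ\varphi^{2t}(x,y).
\end{align}
Thus the assumption that $z$ is $\mu_{m,t}$-random amounts to asserting that the reflected state $\pi\circ\varphi^{2t}(x,y)$ is $\mu_m$-random, and it remains to refute that. Using $m_0^N\circ\pi = m_0^N$ (the time reversal does not change the magnetization), I get $m_0^N(w) = m_0^N(\varphi^{2t}(x,y))$ for every $N$. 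Since $(x,y)$ is ML $\mu_m$-random by hypothesis, Theorem \ref{thm:elln} with $T = 2t$ gives $\lim_{N\to\infty} m_0^N(\varphi^{2t}(x,y)) = \Phi_0^{2t}(m) = (1-2m_1)^{2t}m_0$, so $\lim_{N\to\infty} m_0^N(w) = (1-2m_1)^{2t}m_0$. But if $w$ were ML $\mu_m$-random, Theorem \ref{thm:elln} applied to $w$ at time $t = 0$ would force $\lim_{N\to\infty} m_0^N(w) = \Phi_0^0(m) = m_0$. Comparing, $(1-2m_1)^{2t}m_0 = m_0$, and since $m_0 \neq 0$ this forces $(1-2m_1)^{2t} = 1$, which is impossible for $t\geq 1$ because $(1-2m_1)^{2t}\neq 1$ under the hypotheses (in fact whenever $m_1\notin\{0,1\}$). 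This contradiction proves the theorem.

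I do not expect a serious obstacle; the work lies in getting two bookkeeping points right. The first is the image-measure identity — one must verify that pulling $\mu_{m,t}$ back along the computable bijection $\varphi^{-t}$ recovers precisely $\mu_m$, which is what licenses the use of Theorem \ref{thm:cons}; this uses only $\varphi^{-t}\circ\varphi^t = \mathrm{id}$. The second is the reversibility computation $\varphi^{-t}\circ\pi\circ\varphi^t = \pi\circ\varphi^{2t}$, which is where microscopic reversibility actually does its job, converting a question about the time-reversed state into one about an ordinary (reflected) evolved state. Conceptually the argument says that nonrandomness is carried backward by the dynamics: $(\pi\circ\varphi^t)(x,y)$ fails to be $\mu_{m,t}$-random because its pullback $\pi\circ\varphi^{2t}(x,y)$ already exhibits the ``wrong'' magnetization density $(1-2m_1)^{2t}m_0$ rather than $m_0$, in conflict with the effective law of large numbers. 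The same contradiction can be read forward instead of backward: the same pullback shows that any $\mu_{m,t}$-random state obeys the macroscopic law $\Phi$ when evolved forward from $\Phi^t(m)$, whereas $(\pi\circ\varphi^t)(x,y)$ — by Theorem \ref{thm:elln} applied to $(x,y)$ at times $0,\dots,t$ — evolves forward to macrostate $m$ at time $t$, not to $\Phi^{2t}(m)$.
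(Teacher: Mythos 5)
Your proposal is correct and follows essentially the same route as the paper: assume the time-reversed state is ML $\mu_{m,t}$-random, pull the randomness back through $\varphi^{-t}$ via Theorem \ref{thm:cons} to conclude $(\pi\circ\varphi^{2t})(x,y)$ would be ML $\mu_m$-random, and contradict the effective law of large numbers (Theorem \ref{thm:elln}). The only cosmetic difference is that you test the macrostate of the pulled-back state at the single time $s=0$, where $\Phi_0^{2t}(m)\neq\Phi_0^0(m)$, while the paper phrases the contradiction as existence of some $s\neq t$ in $\{0,\dots,2t\}$ with $\Phi_0^{2t-s}(m)\neq\Phi_0^s(m)$.
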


\begin{proof}
Assume $(\pi \circ \varphi^t)(x,y)$ is ML $\mu_{m,t}$-random for $t \in \mathbb{N} \backslash \{ 0 \}$. By the conservation of ML randomness (Theorem \ref{thm:cons}), $(u,v)=(\varphi^{-t} \circ \pi \circ \varphi^{t})(x,y) = (\pi \circ \varphi^{2t})(x,y)$ is ML $\mu_{m}$-random. From Theorem \ref{thm:elln}, for $s \in \{ 0, \dots, 2t \}$,
\begin{align}
 m_i^N(\varphi^s(u,v)) &= m_i^N((\varphi^s \circ \pi \circ \varphi^{2t})(x,y))
 \notag \\
 &= m_i^N((\pi \circ \varphi^{2t-s})(x,y))
 \notag \\
 &\to \Phi_i^{2t-s}(m) \ \ \text{as} \ N \to \infty.
\end{align}
Since $m_0 \neq 0$, $m_1 \not\in \{ 0, 1/2 \}$ and $t \neq 0$, there exists $s \in \{ 0, \dots, 2t \}$ such that $\Phi_0^{2t-s}(m) \neq \Phi_0^s(m)$ (take $s(\neq t)$). This is a contradiction. \qed
\end{proof}

This consideration leads to the following argument. $(u,v)=(\varphi^{-T} \circ \pi \circ \varphi^{T})(x,y) = (\pi \circ \varphi^{2T})(x,y)$ is ML $\mu_{m, 2T}\pi$-random if and only if $(x,y)=(\pi \circ \varphi^{2T})(u,v)$ is ML $\mu_{m}$-random. As in the proof of Theorem \ref{thm:nrtr},
\begin{align}
\label{eq:ab}
 \lim_{N \to \infty} m_i^N(\varphi^t(u,v)) = \Phi_i^{2T-t}(m)
\end{align}
for $t \in \{ 0, \dots, 2T \}$ and ML $\mu_{m, 2T} \pi$-random element $(u,v) \in \{ 0, 1 \}^{\mathbb{Z}} \times \{0,1\}^{\mathbb{Z}}$. Therefore, if we observe the macroscopic time evolution starting from a random microscopic state with respect to the \textit{initial} probability measure $\mu_{m, 2T} \pi$, the system exhibits the time-reversed behavior of the original macroscopic law $\Phi^t$. In particular, the Boltzmann entropy along the typical macroscopic trajectory \textit{decreases} monotonically: 
\begin{align}
 S_B(\Phi^{2T-t}(m)) - S_B( \Phi^{2T}(m)) < 0 \ \ \text{for} \ \ 0 < t \leq 2T.
\end{align}
Thus, typical macroscopic behavior depends on the choice of an initial probability measure. Even if an initial macroscopic state is given, the initial probability measure representing the state is not unique. Therefore, we have to demonstrate why we regard the Gibbs measure as important. See \ref{sssec:natural} for further discussion.

\section{Concluding remarks}
\label{sec:6}
\subsection{Open problems}

\subsubsection{What is a natural choice of measure?}
\label{sssec:natural}

We need to choose an initial probability measure to state a probabilistic law of large numbers for an irreversible macroscopic law. In this paper, we have chosen the Gibbs measure because it works well in many examples in statistical physics. Then, the system evolves typically so that the entropy increases monotonically to equilibrium. In contrast, as shown in \ref{subsec:53}, if we choose another initial measure carefully, the entropy decreases along the typical macroscopic evolution with respect to the measure. To elucidate the origin of macroscopic irreversibility, we have to clarify the difference between these two measures and to demonstrate why the measures under which the entropy increases are realized in our world. The problem also occurs when we discuss the effective law of large numbers since the notion of randomness formalizes typical states \textit{under a given probability measure}.

\subsubsection{What is the physical meaning of nonrandom states satisfying the macroscopic law?}

We have shown that random microstates satisfy the macroscopic law. However, the reverse is not generally true because the condition on the violation of the macroscopic law is just a part of Martin-L{\"o}f tests. That is to say, there are microscopic states satisfying the macroscopic law but not passing another ML test. Little is known about the physical meaning of such ML tests, and therefore also of nonrandom microstates satisfying the macroscopic law.

\subsubsection{Is algorithmic randomness really relevant to statistical physics?}
In the theory of algorithmic randomness, there are various classes of randomness according to the level of computability imposed on null sets in addition to the Martin-L\"of randomness (see Chapter 7 of \cite{dh} for example). In any case, we take into account \textit{all} effective null sets or corresponding statistical tests. However, all these tests are not necessarily realizable in physical experiments. Therefore, one may say that the theory of algorithmic randomness is unnecessary for the foundation of statistical physics.

A critical problem here is to identify the class of null sets associated with macroscopic properties. To consider the problem, let us recall the argument in \ref{subsec:mlr} motivating the definition of the Martin-L\"of randomness. We have seen that the law of large numbers does not provide a sufficient characterization of randomness. Even if we add another law such as the law of the iterated logarithm to the requirement of randomness, we may find other probabilistic laws having probability one and the requirement may turn out to be insufficient. Avoiding these difficulties, the theory of algorithmic randomness considers \textit{all} effective statistical laws and as a result clarifies a rich structure of randomness such as the equivalence between measure-theoretic typicalness, incompressibility and unpredictability. When we attempt to specify statistical laws involved with macroscopic properties, the above idea may be useful and there may be a deep connection between algorithmic randomness and statistical physics.

\subsection{Related topics}
\label{subsec:rt}
In this paper, we have shown that the algorithmic randomness of microscopic states is a sufficient condition of macroscopic relaxation in the Kac chain model. We expect to extend the theorem to a wider class of models. To prove the effective law of large numbers, we need the upper bound on the probability of the sequence of sets involved with the violation of the macroscopic law that tends to zero in the thermodynamic limit and the computable enumerability of the sequence. The former condition follows from a purely measure-theoretic argument. As long as we focus on macroscopic properties, the latter condition is also expected to be satisfied. For instance, there are deterministic and reversible dynamical systems with particle conservation that exhibit diffusive behavior in the sense of the law of large numbers \cite{raphael1,raphael2}. It is possible to extend our results to these models.

The models we refer above are cellular automata, that is, the dynamical systems on infinite lattices with local rules. The Martin-L\"of randomness in \ref{subsec:mlr} is defined on $\{ 0, 1 \}^{\mathbb{N}}$ and can be applied to only the dynamical systems on discrete state spaces. Recently, the notion of randomness has been generalized to computable metric spaces \cite{gacs2,hr} and applied to the dynamical system theory \cite{ghr10,ghr11}. Applying the theory to statistical physics is an important problem.

Another direction of future study is to generalize the notion of randomness to quantum systems. As in classical settings, the quantum Kolmogorov complexity of a quantum state is defined as the length of the shortest program outputting the description of the state \cite{gacs3,vitanyi}. The notion of Martin-L\"of random quantum state and the relation to the quantum Kolmogorov complexity has been investigated only recently \cite{ns17}. In either case, the algorithmic randomness theory of quantum systems has not yet been sufficiently studied compared to the case of classical systems. An example of the application of quantum randomness is the typicality of thermal equilibrium states \cite{tasaki,gltz,psw,sugita,reimann}. Although there are various mathematical formulations of the typicality of thermal equilibrium in quantum systems, they all state that almost all quantum pure states in a Hilbert space spanned by a set of the energy eigenstates represent thermal equilibrium. With the analogy to the argument in classical systems, we expect that \textit{random quantum states represent thermal equilibrium}. A more challenging theoretical issue in this context is the relation between the algorithmic randomness and the eigenstate thermalization hypothesis (ETH). The ETH insist that \textit{all} the energy eigenstates in an energy shell represent thermal equilibrium with the energy \cite{rdo}. The ETH is regarded as a plausible sufficient condition of thermalization in isolated quantum systems. The thermodynamic structure such as the fluctuation theorem and second law of thermodynamics that has been studied on the basis of the Gibbs state is being re-examined for the energy eigenstates \cite{iks,kis}. We anticipate that it is important to study the ETH from a viewpoint of algorithmic randomness.

Finally, we comment on the unpredictability aspect of algorithmic randomness associated with thermodynamics and statistical physics. The proofs and arguments in this paper are based on the measure-theoretic typicalness aspect of randomness. However, as explained in section \ref{subsec:rob}, the randomness notion has several characterizations such as incompressibility and unpredictability. We should further develop these aspects of statistical physics. For instance, the notion of martingale, which captures the unpredictability aspect of randomness, has not been well studied in statistical physics. Only recently, the martingale property of exponentiated entropy production in stochastic thermodynamics has been investigated \cite{cg,nrj}. A more challenging task is to investigate the fundamental assumption of statistical physic such as the principle of equal a priori probability from the viewpoint of the martingale property. For instance, the game-theoretic probability theory \cite{sv} provides a new formulation of limit theorems in probability theory such as the law of large numbers and the central limit theorem by utilizing only the betting game without the probabilistic structure. The idea in the game-theoretic probability theory that the probability emerges from the martingale property may be useful to study this problem.

\begin{acknowledgements}
The authors thank Naoto Shiraishi and Takahiro Sagawa for their useful comments. The present work was supported by JSPS KAKENHI Grant Number JP17H01148.
\end{acknowledgements}


\end{document}